\tikzstyle{every picture}+=[remember picture,baseline]
\tikzstyle{arrow}+=[thick,rounded corners=0.5em]
\declaretheorem[
	name=Theorem,
	refname={theorem,theorems},
	Refname={Theorem,Theorems}
	]{theorem}
\declaretheorem[sibling=theorem,
	name=Definition,
	refname={definition,definitions},
	Refname={Definition,Definitions}
	]{definition}
\declaretheorem[
	unnumbered,
	name=Definition,
	refname={definition,definitions},
	Refname={Definition,Definitions}
	]{definition*}
\declaretheorem[sibling=theorem,
	name=Lemma,
	refname={lemma,lemmas},
	Refname={Lemma,Lemmas}
	]{lemma}
\declaretheorem[sibling=theorem,
	name=Corollary,
	refname={corollary,corollaries},
	Refname={Corollary,Corollaries}
	]{corollary}
\declaretheorem[
  unnumbered,
	name=Corollary,
	refname={corollary,corollaries},
	Refname={Corollary,Corollaries}
	]{corollary*}
\declaretheorem[
  unnumbered,
	name=Example,
	refname={example,examples},
	Refname={Example,Examples}
	]{example*}
\declaretheorem[
  unnumbered,
	name=Remark,
	refname={remark,remarks},
	Refname={Remark,Remarks}
	]{remark*}
\newlist{lemmaenum}{enumerate}{1}
\setlist[lemmaenum]{label=\arabic*., ref=\thelemma(\thelemmaenumi)}
\renewcommand{\vec}[1]{\boldsymbol{#1}}
\newcommand{\defeq}{\mathbin{\stackrel{\mathclap{\normalfont\mbox{\smaller[3]{def}}}}{=}}}
\newcommand{\WLOG}{w.l.o.g.~}
\newcommand{\Nat}{\mathbb{N}}
\newcommand{\Ordinals}{\mathcal{O}}
\newcommand{\Antecedents}{\mathcal{A}}
\newcommand{\Consequents}{\mathcal{C}}
\newcommand{\Sequents}{\mathcal{S}}
\newcommand{\Models}{\mathcal{M}}
\newcommand{\TraceVals}{\mathcal{T}}
\newcommand{\Rule}{R}
\newcommand{\Rules}{\mathcal{R}}
\newcommand{\RuleIndices}{\mathbb{R}}
\newcommand{\node}{\nu}
\newcommand{\prf}{\mathcal{P}}
\newcommand{\traceval}{\tau}
\newcommand{\TracePairsOf}{\delta}
\newcommand{\CycleThreshold}{\mathbf{C}}
\newcommand{\TraceWidth}{\mathbf{W}}
\newcommand{\inDegree}{\mathbf{in}}
\newcommand{\isDefined}[1]{#1{\downarrow}}
\DeclareMathOperator{\NodesOf}{\mathsf{nodes}}
\DeclareMathOperator{\RuleOf}{\mathsf{rule}}
\DeclareMathOperator{\SeqOf}{\mathsf{seq}}
\DeclareMathOperator{\TraceValsOf}{\mathbb{T}}
\DeclareMathOperator{\OrdinalOf}{\Theta}
\DeclareMathOperator{\Proj}{proj}
\DeclareMathOperator{\ProgPointsOf}{prog}
\DeclareMathOperator{\maxStep}{\ensuremath{\TracePairsOf_{\mathsf{max}}}}
\DeclareMathOperator{\dom}{dom}
\begin{document}

\title{Size Relationships in \\ Abstract Cyclic Entailment Systems}
\author{
  Reuben N.~S.~Rowe \\[0.5em] 
  \textit{School of Computing} \\ 
  \textit{University of Kent, Canterbury, UK} \\
  \texttt{r.n.s.rowe@kent.ac.uk} \\[1em]
  James Brotherston \\[0.5em]
  \textit{Dept. of Computer Science} \\ 
  \textit{University College London, UK} \\
  \texttt{J.Brotherston@ucl.ac.uk}
  }
\date{February 2017}
\maketitle

\begin{abstract}
A \emph{cyclic} proof system generalises the standard notion of a proof as a finite tree of locally sound inferences by allowing proof objects to be potentially infinite. Regular infinite proofs can be finitely represented as \emph{graphs}. To preclude spurious cyclic reasoning, cyclic proof systems come equipped with a well-founded notion of `size' for the models that interpret their logical statements. A global soundness condition on proof objects, stated in terms of this notion of `size', ensures that any non-well-founded paths in the proof object can be disregarded.

We give an abstract definition of a subclass of such cyclic proof systems: cyclic \emph{entailment} systems. In this setting, we consider the problem of comparing the size of a model when interpreted in relation to the antecedent of an entailment, with that when interpreted in relation to the consequent. Specifically, we give a further condition on proof objects which ensures that models of a given entailment are always `smaller' when interpreted with respect to the consequent than when interpreted with respect to the antecedent. Knowledge of such relationships is useful in a program verification setting.
\end{abstract}

We consider the following abstract formulation of sequent-style proof systems for entailments between antecedents and consequents.

\begin{definition}[Abstract Entailment Proof Systems]
  Let $\Antecedents$ and $\Consequents$ be sets of \emph{antecedents} and \emph{consequents}, respectively. The set $\Sequents$ of \emph{sequents} is then the cartesian product $\Antecedents \times \Consequents$. We will write sequents $(A, C)$ as $A \vdash C$. A \emph{rule (schema)} $\Rule \subseteq \Sequents \times \Sequents^{\Nat}$ is a set of \emph{rule instances}. Rule instances $(S, \epsilon)$ are called \emph{axiomatic}. $\Rules = \wp(\Sequents \times \Sequents^{\Nat})$ is the set of all rules, which we may assume to be indexed by some set $\RuleIndices$. An \emph{entailment proof system} is a tuple $(\Sequents, \vec{r})$, where $\vec{r} \subseteq \RuleIndices$ identifies the rule schemas of the proof system.
\end{definition}

Here, we use bold-font vector notation for sequences, writing $\vec{s}_i$ for the $i$\textsuperscript{th} element of $\vec{s}$, $\epsilon$ for the empty sequence, $\vec{s_1} \cdot \vec{s_2}$ for sequence concatenation, and $|\vec{s}|$ for the number of elements in $\vec{s}$. We also write $\vec{s}_{i..j}$ to indicate that $\vec{s} = s_{i}, \ldots, s_{j}$ and we sometimes abuse notation by using $\vec{s}$ to refer to the set of elements occurring in $\vec{s}$.

A notion of \emph{cyclic} proof (as found, e.g., in \cite{Brotherston:05,Brotherston:07,Brotherston-Simpson:11}) can also be formulated in this abstract framework as follows.

\begin{definition}[Cyclic Pre-proofs]
  A (cyclic) pre-proof $\prf$ in an entailment proof system $(\Sequents, \vec{r})$ is a (finite) directed graph in which the children of each node $\node \in \NodesOf(\prf)$ are ordered. A node represents a rule instance via the functions $\SeqOf : \NodesOf(\prf) \rightarrow \Sequents$ and $\RuleOf : \NodesOf(\prf) \rightarrow \RuleIndices$, which satisfy $\RuleOf(\node) \in \vec{r}$ and $(\SeqOf(\node), (\SeqOf(\node_1), \ldots, \SeqOf(\node_n))) \in \Rules_{\RuleOf(\node)}$ for each node $\node \in \NodesOf(\prf)$ with children $\node_1, \ldots, \node_n$. In an abuse of notation, we may use $\node$ to refer to the rule instance it corresponds to. A \emph{path} in $\prf$ is a (possibly infinite) sequence of nodes $\vec{\node}$ of $\prf$ such that $\vec{\node}_{i}$ is the parent of $\vec{\node}_{i+1}$ for each non-terminal node $\vec{\node}_i$ in the path. We say that a path is \emph{rooted} at its first node.
\end{definition}

The semantics of such proof systems can be given by \emph{satisfaction relations} between elements of the proof system (i.e.~antecedents and consequents) and any appropriate notion of `model'.

\begin{definition}[Semantics]
  We fix a set $\Models$ of models and assume two satisfaction relations, ${\models_{\Antecedents}} \subseteq \Models \times \Antecedents$ and ${\models_{\Consequents}} \subseteq \Models \times \Consequents$, writing $m \models_{\Antecedents} A$ (resp.~$m \models_{\Consequents} C$) to mean $(m, A) \in {\models_{\Antecedents}}$ (resp.~$(m, C) \in {\models_{\Consequents}}$) for $m \in \Models$, $A \in \Antecedents$, and $C \in \Consequents$. We also write $m \models (A \vdash C)$ to mean that $m \models_{\Antecedents} A \Rightarrow m \models_{\Consequents} C$ holds. Thus, when we write $m \not\models (A \vdash C)$ we mean that $m \models_{\Antecedents} A$ but $m \not\models_{\Consequents} C$. We say that a sequent $A \vdash C$ is \emph{consistent} if there exists a model $m \models_{\Antecedents} A$, and \emph{inconsistent} otherwise.
\end{definition}

The definition of validity for sequents in our abstract framework is the standard one.

\begin{definition}[Validity]
  We say that a sequent $A \vdash C$ is \emph{valid} if and only if $m \models (A \vdash C)$ for all models $m \in \Models$. Alternatively, writing $\Models(A)$ and $\Models(C)$ for the set of models $m$ such that $m \models_{\Antecedents} A$ and $m \models_{\Consequents} C$, respectively, we may define $A \vdash C$ to be valid if and only if $\Models(A) \subseteq \Models(C)$.
\end{definition}

The minimum that we require of any proof system is that its proof rules are (locally) sound. The notion of local soundness has a uniform definition in terms of validity.

\begin{definition}[Local Soundness]
  We say that a rule instance $(S, (S_1, \ldots, S_n))$ is \emph{locally sound} if and only if whenever each $S_i$ is valid then $S$ is also valid. We say that a rule $\Rule$ is locally sound when all of its instances $(S, \vec{S}) \in \Rule$ are.
\end{definition}

Local soundness of the proof rules is not sufficient to ensure the soundness of \emph{cyclic} proof systems however. For this, we must also require that cyclic proofs satisfy a certain \emph{global} soundness property. In our general abstract framework, this can be defined using the following notion of \emph{traces}.

\begin{definition}[Trace Values]
  We fix disjoint sets $\TraceVals_{\Antecedents}$ and $\TraceVals_{\Consequents}$ of \emph{trace values}, and assume two functions $\TraceValsOf_{\Antecedents} : \Antecedents \rightarrow \wp_{\mathrm{fn}}(\TraceVals_{\Antecedents})$ and $\TraceValsOf_{\Consequents} : \Consequents \rightarrow \wp_{\mathrm{fn}}(\TraceVals_{\Consequents})$ that return a \emph{finite} set of trace values for each antecedent and consequent, respectively. In an abuse of notation, for a sequent $S = A \vdash C$, we may write $\TraceValsOf_{\Antecedents}(S)$ and $\TraceValsOf_{\Consequents}(S)$ to denote $\TraceValsOf_{\Antecedents}(A)$ and $\TraceValsOf_{\Consequents}(C)$ respectively. Then, in a further abuse of notation, for a node $\node$ in a cyclic pre-proof we write $\TraceValsOf_{\Antecedents}(\node)$ and $\TraceValsOf_{\Consequents}(\node)$ for $\TraceValsOf_{\Antecedents}(\SeqOf(\node))$ and $\TraceValsOf_{\Consequents}(\SeqOf(\node))$ respectively. Moreover, for a cyclic pre-proof $\prf$ we may write $\TraceValsOf_{\Antecedents}(\prf)$ to stand for the set $\bigcup_{\node \in \prf} \TraceValsOf_{\Antecedents}(\node)$, and write $\TraceValsOf_{\Consequents}(\prf)$ for the similarly defined set of consequent trace values appearing in $\prf$.
  We may drop the subscript and refer to both functions using $\TraceValsOf$ where the meaning is clear from the context.
\end{definition}

Let $\Ordinals$ denote (an initial segment of) the ordinals.

\begin{definition}[Trace Pair Functions]
  A \emph{trace pair function} $\vec{\TracePairsOf}$ is a (computable) family of pairs of functions, one for each (non-axiomatic) rule instance:
  \begin{multline*}
    \TracePairsOf^{(r, (S, \vec{S}_{1..n}), i)} : (\TraceValsOf_{\Antecedents}(S) \times \TraceValsOf_{\Antecedents}(\vec{S}_i) \rightharpoonup \Ordinals) \times (\TraceValsOf_{\Consequents}(S) \times \TraceValsOf_{\Consequents}(\vec{S}_i) \rightharpoonup \Ordinals)
  \end{multline*}
  where for each function $r \in \RuleIndices$, $(S, \vec{S}_{1..n}) \in \Rules_{r}$, and $i \in \{ 1, \ldots, n \}$. For convenience we will henceforth write $\TracePairsOf^{(r, (S, \vec{S}), i)}_{k}$ for $\Proj_{k}(\TracePairsOf^{(r, (S, \vec{S}), i)})$ where $k \in \{ 1, 2 \}$ and $\Proj_{k}$ projects the $k$\textsuperscript{th} element of a tuple, or simply $\TracePairsOf^{(r, (S, \vec{S}), i)}$ when the value of $k$ is clear from the context.
\end{definition}
  The intuition behind the trace pair functions is that the ordinal assigned to a trace pair represents a minimum distance between the `size' of any models that realize these trace values.
  
  When $(\traceval, \traceval')$ is in the domain of $\TracePairsOf^{(r, (S, \vec{S}), i)}_{k}$ (for $k \in \{ 1, 2 \}$), we say that $(\traceval, \traceval')$ is a (resp.~left-hand and right-hand) trace pair \emph{for} the rule instance $(S, \vec{S}) \in \Rules_{r}$ with respect to the $i$\textsuperscript{th} premise. If $\TracePairsOf^{(r, (S, \vec{S}), i)}_{k}(\traceval, \traceval') > 0$ then $(\traceval, \traceval')$ is called a \emph{progressing} trace pair, and it is called \emph{non-progressing} when $\TracePairsOf^{(r, (S, \vec{S}), i)}_{k}(\traceval, \traceval') = 0$. 
If $\node$ is a node in a cyclic pre-proof, with child nodes $\vec{\node'}_{\!\!1..m}$, then in an abuse of notation we may write $\TracePairsOf^{(\node, \vec{\node'}_{\!\!i})}_{k}$ for $\TracePairsOf^{(\RuleOf(\node), (\SeqOf(\node), (\SeqOf(\vec{\node'}_{\!\!1}), \ldots, \SeqOf(\vec{\node'}_{\!\!m}))), i)}_{k}$, and say that $(\traceval, \traceval')$ is a left-hand (resp.~right-hand) trace pair for $(\node, \vec{\node'}_{\!\!i})$ when $(\traceval, \traceval')$ is in the domain of $\TracePairsOf^{(\node, \vec{\node'}_{\!\!i})}_{k}$ for $k = 1$ (resp.~$k = 2$). For a left-hand (resp.~right-hand) trace value $\traceval \in \TraceValsOf(\node)$, we say that $\traceval$ is \emph{terminal} for $(\node, \vec{\node'}_{\!\!i})$ if there is no trace value $\traceval'$ such that $(\traceval, \traceval')$ is a left-hand (resp.~right-hand) trace pair for $(\node, \vec{\node'}_{\!\!i})$. We say that $\traceval$ is simply \emph{terminal for $\node$} when $\traceval$ is terminal for each $(\node, \vec{\node'}_{\!\!i})$.
  
  A \emph{cyclic} entailment proof system is a tuple $(\Sequents, \vec{r}, \TraceValsOf, \vec{\TracePairsOf})$. From now on, we shall assume some fixed cyclic entailment system.

\begin{definition}[Traces]
  A \emph{left-hand} (resp.~\emph{right-hand}) \emph{trace} $\vec{\traceval} \in \TraceVals_{\Antecedents}^{\omega}$ (resp.~$\vec{\traceval} \in \TraceVals_{\Consequents}^{\omega}$) is a (possibly infinite) sequence of trace values. We say that a left-hand (resp.~right-hand) trace \emph{follows} a path $\vec{\node}$ in a pre-proof $\prf$ when for every non-terminal value $\vec{\traceval}_{i}$ in the trace there are corresponding nodes $\vec{\node}_{i}$ and $\vec{\node}_{i+1}$ in the path such that $(\vec{\traceval}_{i}, \vec{\traceval}_{i+1})$ is a trace pair for $(\vec{\node}_{i}, \vec{\node}_{i+1})$. If $(\vec{\traceval}_{i}, \vec{\traceval}_{i+1})$ is a progressing trace pair for $(\vec{\node}_{i}, \vec{\node}_{i+1})$, then we say that the trace progresses at $i$. If the trace progresses at infinitely many points, then we say that it is infinitely progressing.
\end{definition}

\begin{definition}[Global Soundness]
\label{def:GlobalSoundness}
  A cyclic pre-proof $\prf$ is a valid cyclic \emph{proof} when every infinite path $\vec{\node} \in \prf$ has a tail that is followed by some infinitely progressing left-hand trace.
\end{definition}

This global soundness condition is decidable via a B\"{u}chi automata construction. If the proof system is sufficiently well-behaved -- specifically, if it admits an \emph{ordinal trace function} -- then validity of sequents occurring in cyclic proofs is guaranteed.

\begin{definition}[Ordinal Trace Function]
\label{def:OrdinalTraceFunction}
  An \emph{ordinal trace function} 
  is a partial function $\OrdinalOf : (\TraceVals_{\Antecedents} \cup \TraceVals_{\Consequents}) \times \Models \rightharpoonup \Ordinals$ which is (at least) defined on all $(\traceval, m)$ such that $\traceval \in \TraceVals_{\Antecedents}(A)$ and $m \models A$ for some antecedent $A$ and, dually, is undefined on all $(\traceval, m)$ such that $\traceval \in \TraceVals_{\Consequents}(C)$ .
  As usual, we write $\isDefined{\OrdinalOf(\traceval, m)}$ to denote that $(\traceval, m) \in \dom(\OrdinalOf)$.
\end{definition}

The ordinal trace function must also satisfy the following condition.

\begin{definition}[Descending Counter-model Property]
\label{def:DescendingCountermodels}
An ordinal trace function $\OrdinalOf$ satisfies the \emph{descending counter-model} property if and only if for all $r \in \vec{r}$:
  \begin{multline*}
    m \not\models S \wedge (S, \vec{S}) \in \Rules_{r} \Rightarrow {} \\
      \hspace{-5em} \exists \, m', {S_i \in \vec{S}} : m' \not\models S_i \wedge {} \\
        (\TracePairsOf^{(r, (S, \vec{S}), i)}_{1}(\traceval, \traceval') = \alpha \Rightarrow \OrdinalOf(\traceval', m') + \alpha \leq \OrdinalOf(\traceval, m))
  \end{multline*}
\end{definition}

The ordinal trace function can be seen as a \emph{realization} function that assigns realizers to trace values; thus models realize trace values. In fact, the ordinal trace function provides more information: it also tells us the `size' of each realization.
We note that the existence of an ordinal trace function 
entails local soundness, 
because of the requirement that falsifiability of the conclusion of a rule implies falsifiability of one of its premises. 

Let us assume that an ordinal trace function exists for our cyclic entailment system. The following soundness result holds.

\begin{theorem}[Soundness of Cyclic Proof Systems]
  Suppose $\prf$ is a valid cyclic proof of a sequent $S$ (i.e.~$\prf$ satisfies the global soundness condition of \Cref{def:GlobalSoundness}), then $S$ is valid.
\end{theorem}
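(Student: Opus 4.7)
The plan is to argue by contradiction, manufacturing an infinite descending chain of ordinals from any hypothetical countermodel to $S$.

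Suppose $S$ is not valid, so some $m_0 \in \Models$ satisfies $m_0 \not\models S$. Starting at $\Root(\prf)$, I would iteratively apply the descending counter-model property (\Cref{def:DescendingCountermodels}) to construct an infinite path $\vec{\node}_0, \vec{\node}_1, \ldots$ in $\prf$ rooted at $\Root(\prf)$, together with countermodels $m_i \not\models \SeqOf(\vec{\node}_i)$. At each step, the property hands us an index $k_i$ and a model $m_{i+1}$ falsifying the $k_i$\textsuperscript{th} premise of the rule instance at $\vec{\node}_i$, and I take the corresponding child of $\vec{\node}_i$ in $\prf$ as $\vec{\node}_{i+1}$. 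The construction cannot stall at an axiomatic leaf, because there the property would demand a falsified premise where none exists. Crucially, at each stage the chosen $m_{i+1}$ is uniform in the trace values: the ordinal inequality holds for \emph{every} left-hand trace pair at that step.

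By the global soundness condition (\Cref{def:GlobalSoundness}), this infinite path has a tail, say beginning at index $N$, that is followed by an infinitely progressing left-hand trace $\vec{\traceval}_0, \vec{\traceval}_1, \ldots$, with each $\vec{\traceval}_i \in \TraceValsOf_{\Antecedents}(\vec{\node}_{N+i})$. Since $m_{N+i} \not\models \SeqOf(\vec{\node}_{N+i})$ forces the antecedent of that sequent to be satisfied by $m_{N+i}$, \Cref{def:OrdinalTraceFunction} guarantees that $\OrdinalOf(\vec{\traceval}_i, m_{N+i})$ is defined for every $i$. Applying the descending counter-model inequality at each step along this trace shows that this ordinal sequence is non-increasing, and strictly decreasing at every progression point, since the associated $\alpha$ is then positive.

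Because the trace progresses at infinitely many points, this yields an infinite strictly descending chain in $\Ordinals$ -- the desired contradiction. The one subtle bookkeeping point I would need to verify is that the iterative construction genuinely proceeds step by step with a countermodel that is compatible with \emph{any} trace pair later following it; this falls straight out of the uniformity in trace values built into \Cref{def:DescendingCountermodels}, so I do not anticipate a real obstacle there. The main conceptual move is simply the bridge from the syntactic global soundness condition to the semantic infinite descent, mediated by the ordinal trace function.
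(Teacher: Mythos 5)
Your proof is correct and is exactly the intended argument: the paper states this theorem without proof (it is the standard result from the cited Brotherston--Simpson line of work), and the surrounding definitions --- the descending counter-model property, the definedness clause of the ordinal trace function on antecedent trace values of satisfied antecedents, and the global soundness condition --- are set up precisely to support the infinite-descent contradiction you construct. Your handling of the two delicate points (the path cannot terminate at an axiom, and the chosen counter-model is uniform over all left-hand trace pairs at each step) is also right.
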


We will now consider a further global condition on pre-proofs which allows us to relate antecedent trace values to consequent trace values. The intention is that, for valid cyclic proofs, this relation is \emph{sound} in the sense that 
the ordinal trace function respects this ordering. 

We begin by defining a \emph{maximality} property for right-hand traces. For this notion we fix a (decidable) predicate on rule instances $(A \vdash C, \vec{S})$ and right-hand trace values $\traceval \in \TraceVals(C)$, that we call the \emph{exclusion} predicate, with the property that $(A \vdash C, \vec{S})$ excludes $\traceval$ only if either there are no models $m$ such that $m\ \models A$ or there is no model $m$ of $A$ such that $\isDefined{\OrdinalOf(\traceval, m)}$.
For example, if the proof system contains an inconsistency axiom (i.e.~there exist no models of the sequent) then every instance of that axiom excludes all consequent trace values of the sequent.
We use this predicate to disregard right-hand traces which do not correspond to any model of the initial trace value.

\begin{definition}[Maximal Right-hand Traces]
\label{def:MaximalRighthandTraces}
  Let $\vec{\traceval}$ be a right-hand trace following a path $\vec{\node}$ in a (cyclic) pre-proof; we say that $\vec{\traceval}$ is \emph{maximal} when:
  \begin{enumerate*}[label=\roman*)]
    \item 
    it is finite (of length $n$, say); and
    \item 
    $\vec{\traceval}_{n}$ is terminal for $\vec{\node}_{n}$.
  \end{enumerate*}
  When $\vec{\node}_{n}$ excludes $\vec{\traceval}_{n}$ then we say $\vec{\traceval}$ is \emph{negative}; otherwise, we say it is \emph{positive}. In the case that $\vec{\node}_{n}$ is axiomatic, we say that $\vec{\traceval}$ is \emph{partially} maximal, and \emph{fully} maximal otherwise.
\end{definition}

We will need to ensure that maximal right-hand traces can be matched up with left-hand traces in a way that allows us to relate the sizes of the models that realize their initial trace values. To do so, we define two (semantic) notions for rules instances and sequents, respectively. We will write $\underline{1}$ for the least value taken by the ordinal trace function $\OrdinalOf$.

\begin{definition}[Grounded Traces]
  Let $(A \vdash C, \vec{S}) \in \Rules_{r}$ be an instance of rule $r$ and $\traceval \in \TraceValsOf(C)$ a trace value terminal for $(A \vdash C, \vec{S})$; we say that $\traceval$ is \emph{ground} with respect to $(A \vdash C, \vec{S})$ if whenever $A \vdash C$ is valid then $m \models_{\Antecedents} A$ implies that, when defined, $\OrdinalOf(\traceval, m) = \underline{1}$ for all models $m$. We say that a trace $\vec{\traceval}_{1..n}$ following a path $\vec{\node}$ is \emph{grounded} when its final trace value is ground (i.e.~when $\vec{\traceval}_{n}$ is ground with respect to $\vec{\node}_{n}$).
\end{definition}

Intuitively, this property allows us to determine when a maximal right-hand trace (following some path) ends in a trace value that is realized by a `smallest' model. Typically, the node in a path corresponding to the last value in such a maximal right-hand trace will be an instance of an unfolding rule. Some proof systems will allow maximal right-hand traces with final values whose realizers are not minimal, and we will want to ensure that we do not consider such traces. A typical example of such traces are those where the node in the proof corresponding to the final trace value is an instance of a right weakening rule.

We also define a family of relations between antecedent trace values $\TraceVals_{\Antecedents}$ and consequent trace values $\TraceVals_{\Consequents}$, indexed by sequents. The intuition is that related trace values have realizers of equal sizes.

\begin{definition}[Trace Values Equated by a Sequent]
  We say that a sequent $S \equiv A \vdash C$ \emph{equates} trace values $\traceval \in \TraceValsOf(A)$ and $\traceval' \in \TraceValsOf(C)$, and write $\traceval =_{S} \traceval'$, if whenever $S$ is valid then $m \models_{\Antecedents} A$ and $\isDefined{\OrdinalOf(\traceval', m)}$ implies that $\OrdinalOf(\traceval, m) = \OrdinalOf(\traceval', m)$ for all models $m$.
\end{definition}

The two properties that we have just defined will not, in general, be decidable for any given cyclic proof system. Therefore, in order to infer size relationships from cyclic proofs, we may need to approximate them. In the case of ground trace values, it will be sufficient if we can decide this for all instances of a given set of rules; then we can simply define no trace values to be ground with respect to instances of rules outside this set. Similarly, for trace values equated by a sequent, it will be sufficient for this to be decidable only for those sequents that are the conclusion of an \emph{axiomatic} rule instance.

To relate antecedent and consequent trace, we will be considering a \emph{quantitative} property of left- and right-hand traces, essentially counting the progression along each of the traces. We therefore define the following notion of size for traces.

\begin{definition}[Size of a Trace Along a Path]
\label{def:ProgressionSum}
  The \emph{size} $\ProgPointsOf_{\vec{\node}}(\vec{\traceval})$ of a \emph{finite} left-hand (resp.~right-hand) trace $\vec{\traceval}_{1..n}$ along a path $\vec{\node}$ which it follows is defined as follows, where $k = 1$ (resp.~$k = 2$):
  \begin{align*}
    \ProgPointsOf_{\vec{\node}}(\vec{\traceval}) & = 0 & (n = 1) \\
    \ProgPointsOf_{\vec{\node}}(\vec{\traceval}) & = \TracePairsOf^{(\vec{\node}_{n-1}, \vec{\node}_{n})}_{k}(\vec{\traceval}_{n-1}, \vec{\traceval}_{n}) + \ldots + \TracePairsOf^{(\vec{\node}_{1}, \vec{\node}_{2})}_{k}(\vec{\traceval}_{1}, \vec{\traceval}_{2}) & (n > 1)
  \end{align*}
\end{definition}
\noindent
Notice that this definition uses a \emph{reverse} sum. This is crucial for our result, as can been seen in the proof of \Cref{lem:TraceBoundLemma}, and is necessary because ordinal addition is \emph{not} commutative.

We can now define a trace value ordering relation with respect to pre-proofs. It is this relation that we intend to express the size relationship between the realizers (models) of trace values, and thus is at the heart of the result we present here.

\begin{definition}[Trace Value Ordering Relations]
\label{def:TraceValueOrdering}
  Let $\SeqOf(\node) = A \vdash C$ be a sequent in a cyclic proof $\prf$, with trace values $\traceval_1 \in \TraceValsOf(A)$ and $\traceval_2 \in \TraceValsOf(C)$; we will write $\traceval_2 \leq^{\node}_{\prf} \traceval_1$ whenever it holds that for all positive maximal right-hand traces $\vec{\traceval}_{1..n}$ with $\vec{\traceval}_1 = \traceval_2$ following paths $\vec{\node} \in \prf$ rooted at $\node$, there exists a left-hand trace $\vec{\traceval'}_{\!\!1..k}$, with $k \leq n$ and $\vec{\traceval'}_{\!\!1} = \traceval_1$, following $\vec{\node}$ such that:
  \begin{enumerate}[nosep,label=\roman*)]
    \item 
    $\ProgPointsOf_{\vec{\node}}(\vec{\traceval}) \leq \ProgPointsOf_{\vec{\node}}(\vec{\traceval'})$; and
    \item 
    either $\vec{\traceval}$ is grounded, or $\vec{\traceval}_n$ is partially maximal, $k = n$ and $\vec{\traceval'}_{\!\!n} =_{\SeqOf(\vec{\node}_n)} \vec{\traceval}_n$.
  \end{enumerate}
  If this condition holds with $\ProgPointsOf_{\vec{\node}}(\vec{\traceval}) < \ProgPointsOf_{\vec{\node}}(\vec{\traceval'})$, then we write $\traceval_2 <^{\node}_{\prf} \traceval_1$.
\end{definition}

We now proceed to define the extra property required of the ordinal trace function to be able to ensure that the trace value ordering relation is sound for valid proofs. In the following definition, we call a rule instance $(S, \vec{S})$ valid whenever the conclusion $S$ and each premise $\vec{S}_i$ are all valid sequents.

\begin{definition}[Descending Model Property]
  An ordinal trace function $\OrdinalOf$ satisfies the \emph{descending model} property if and only if for all valid, non-axiomatic rule instances $(A \vdash C, \vec{S}) \in \Rules_{r}$ ($r \in \vec{r}$), trace values $\traceval_1 \in \TraceValsOf(A)$ and $\traceval_2 \in \TraceValsOf(C)$, and models $m \models_{\Antecedents} A$, there exists a premise $A_i \vdash C_i \in \vec{S}$ and a model $m'$ such that:
  \begin{enumerate}[noitemsep,label={\roman*)}]
    \item 
    $m' \models_{\Antecedents} A_i$;
    \item 
    $\traceval_2$ is terminal for $(A \vdash C, \vec{S})$ or there is $\traceval'$ such that $\TracePairsOf^{(r, (A \vdash C, \vec{S}), i)}_{2}(\traceval_{2}, \traceval')$ is defined; and  
    \item 
    for all trace values $\traceval' \in \TraceVals$ and $k \in \{ 1, 2 \}$:
      \begin{multline*}
        \TracePairsOf^{(r, (A \vdash C, \vec{S}), i)}_{k}(\traceval_{k}, \traceval') = \alpha \wedge \isDefined{\OrdinalOf(\traceval_{k}, m)} \\
        {} \Rightarrow \isDefined{\OrdinalOf(\traceval', m')} \wedge \OrdinalOf(\traceval', m') + \alpha \sim_{k} \OrdinalOf(\traceval_{k}, m)
      \end{multline*}
    where $\sim_1 \defeq \leq$ and $\sim_2 \defeq \geq$
  \end{enumerate}
\end{definition}

Notice that this is different from the descending \emph{counter}-model property that we defined above, in that it says something about the \emph{models} of \emph{valid} sequents rather than \emph{counter}-models of \emph{in}valid ones, and that it also talks about consequent trace values. It asserts that the trace pair function soundly \emph{bounds} the difference in size between the realizations (i.e.~models) of trace pairs. In the case of antecedents this difference is bounded from above, and for consequents from below. As we shall see below, this means that the (reverse) sum of all the progression steps along a left-hand trace acts as a \emph{lower bound} on the size of realizations of the initial trace value; similarly, this sum for a right-hand trace serves as an \emph{upper} bound.

If an ordinal trace function satisfies the trace descent property above, then this is sufficient to guarantee that every model of a sequent in a valid cyclic proof corresponds to some positive maximal right-hand trace. We make this correspondence formal through the following notion of \emph{support}.

\begin{definition}[Trace Supports]
\label{def:TraceSupport}
  Let $\vec{\nu}$ be a path in a pre-proof $\prf$, where we have $\SeqOf(\vec{\nu}_i) \equiv A_i \vdash C_i$ for each node $\vec{\node}_i$ in the path, and let $\vec{\traceval}_{1..n}$ be a finite left-hand (resp.~right-hand) trace; we say that a sequence of models $\vec{m}_{1..k}$ ($k \geq n$) \emph{supports $\vec{\traceval}$ along $\vec{\nu}$} (and call $\vec{m}$ \emph{witness} of the support) if $\vec{\traceval}$ follows $\vec{\nu}$ and $\vec{m}_i \models_{\Antecedents} A_i$ (resp.~$\vec{m}_i \models_{\Consequents} C_i$) with $\isDefined{\OrdinalOf(\vec{\traceval}_{i}, \vec{m}_{i})}$ for each $0 < i \leq n$, and:
  \begin{gather*}
    \OrdinalOf(\vec{\traceval}_{j+1}, \vec{m}_{j+1}) + \TracePairsOf^{(\vec{\node}_{j}, \vec{\node}_{j+1})}_{k}(\vec{\traceval}_{j}, \vec{\traceval}_{j+1}) \sim \OrdinalOf(\vec{\traceval}_{j}, \vec{m}_{j})
  \end{gather*}
  for each $0 < j < n$ where $k = 1$ and ${\sim} \defeq {\leq}$ (resp.~$k = 2$ and ${\sim} \defeq {\geq}$). We may also say that a model $m$ supports a trace $\vec{\traceval}$ along a path $\vec{\node}$ when there exists a support witness $\vec{m}$ of $\vec{\traceval}$ along $\vec{\node}$ with $\vec{m}_1 = m$.
\end{definition}

We can show that if the ordinal trace function satisfies the trace descent property then every model of a sequent in a cyclic proof corresponds to (in the sense that it \emph{supports}) a (necessarily positive) maximal right-hand trace along some path rooted at that sequent. In fact, we have the stronger result that the same witness of this right-hand trace support additionally supports all left-hand traces that also follow this path.

\begin{lemma}[Supported Trace Existence]
\label{lem:SupportedTraceExistence}
  Assume the ordinal trace function satisfies the Trace Descent Property. If $A \vdash C \equiv \SeqOf(\node)$ is the sequent of a node $\node$ in a cyclic proof $\prf$ such that there is a trace value $\traceval' \in \TraceValsOf(C)$, and $m$ is a model such that $m \models_{\Antecedents} A$ and $\isDefined{\OrdinalOf(\traceval', m)}$, then there exists a path $\vec{\node}$ in $\prf$ rooted at $\node$, a positive maximal right-hand trace $\vec{\traceval'}_{\!\!1..n}$ beginning with $\traceval'$ and a witness $\vec{m}$ with $\vec{m}_1 = m$ such that $\vec{m}$ supports $\vec{\traceval'}$ along $\vec{\node}$; moreover $\vec{m}$ supports all left-hand traces $\vec{\traceval}$ that follow $\vec{\node}$.
\end{lemma}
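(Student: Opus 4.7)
The plan is to construct the path $\vec{\node}$, the right-hand trace $\vec{\traceval'}$, and the witness $\vec{m}$ simultaneously by induction on length, using the Descending Model Property at each non-axiomatic step. I would initialize with $\vec{\node}_1 = \node$, $\vec{\traceval'}_1 = \traceval'$, and $\vec{m}_1 = m$, which by hypothesis already satisfy $\vec{m}_1 \models_{\Antecedents} A$ and $\isDefined{\OrdinalOf(\vec{\traceval'}_1, \vec{m}_1)}$. Inductively, having built an initial segment of length $i$ that maintains the invariant $\vec{m}_i \models_{\Antecedents} A_i$ together with $\isDefined{\OrdinalOf(\vec{\traceval'}_i, \vec{m}_i)}$, I stop if $\vec{\node}_i$ is axiomatic; otherwise, since every sequent occurring in $\prf$ is valid (apply the Soundness Theorem to the sub-proof rooted at each node, noting that it still satisfies global soundness because paths are not required to start at the overall root), the rule instance at $\vec{\node}_i$ is valid. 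I then apply the Descending Model Property with $\traceval_2 = \vec{\traceval'}_i$ and model $\vec{m}_i$ to obtain a premise index $j$ and a model $\vec{m}_{i+1}$, setting $\vec{\node}_{i+1}$ to be the $j$th child. Condition (ii) of that property gives a clean dichotomy: either $\vec{\traceval'}_i$ is terminal for the whole rule instance, in which case I stop, or there exists an extension $\vec{\traceval'}_{i+1}$ which I adjoin. Condition (iii) with $k = 2$ simultaneously supplies the right-hand support inequality at step $i$ and the definedness $\isDefined{\OrdinalOf(\vec{\traceval'}_{i+1}, \vec{m}_{i+1})}$ needed to continue.

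Second, I would establish termination. Were the construction to produce an infinite path $\vec{\node}$, Global Soundness (\Cref{def:GlobalSoundness}) would furnish a tail followed by an infinitely progressing left-hand trace $\vec{\traceval}$. Since $\OrdinalOf$ is defined on antecedent trace values paired with any model of the antecedent, $\OrdinalOf(\vec{\traceval}_1, \vec{m}_1)$ is defined at the root of this tail, and iteratively applying condition (iii) of the Descending Model Property with $k = 1$ gives
\[
  \OrdinalOf(\vec{\traceval}_{i+1}, \vec{m}_{i+1}) + \TracePairsOf^{(\vec{\node}_i, \vec{\node}_{i+1})}_{1}(\vec{\traceval}_i, \vec{\traceval}_{i+1}) \leq \OrdinalOf(\vec{\traceval}_i, \vec{m}_i)
\]
throughout the tail. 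Hence the sequence $\OrdinalOf(\vec{\traceval}_i, \vec{m}_i)$ is non-increasing and strictly decreasing at every progression point, contradicting well-foundedness of $\Ordinals$. This termination step --- where the local per-step bound from the Descending Model Property collides with the infinite progression guaranteed by global soundness --- is the main obstacle.

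Third, I would verify the remaining claims. Maximality: the construction halts at step $n$ either because $\vec{\node}_n$ is axiomatic (partial maximality, with $\vec{\traceval'}_n$ trivially terminal) or because condition (ii) forced $\vec{\traceval'}_n$ to be terminal for the entire rule instance (full maximality). Positivity: since $\vec{m}_n \models_{\Antecedents} A_n$ and $\isDefined{\OrdinalOf(\vec{\traceval'}_n, \vec{m}_n)}$, neither disjunct in the precondition of the exclusion predicate can hold at $(\vec{\node}_n, \vec{\traceval'}_n)$, so $\vec{\traceval'}$ is not negative. Right-hand support of $\vec{\traceval'}$ by $\vec{m}$ follows directly from the construction. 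Finally, left-hand support for every left-hand trace following $\vec{\node}$ is immediate from precisely the same $k = 1$ instance of condition (iii) already invoked in the termination argument, together with the base-case definedness $\isDefined{\OrdinalOf(\vec{\traceval}_1, \vec{m}_1)}$ furnished by the definition of ordinal trace function.
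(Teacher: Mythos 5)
Your proposal is correct and follows essentially the same route as the paper's own proof: iteratively apply the Descending Model (Trace Descent) Property to build the path, right-hand trace and witness simultaneously, rule out an infinite trace by combining global soundness with the supported left-hand trace and the well-foundedness of the ordinals, and obtain maximality and positivity from the terminal/axiomatic dichotomy together with the definedness of $\OrdinalOf(\vec{\traceval'}_{\!\!n}, \vec{m}_n)$. You supply rather more detail than the paper's sketch (notably the explicit positivity argument via the exclusion predicate), but the decomposition and the key ideas are identical.
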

\begin{proof}
  Since $\prf$ is a cyclic \emph{proof}, all the sequents it contains are valid. Using the Trace Descent Property, we can then show that there exists a path $\vec{\node} \in \prf$ rooted at $\node$, a right-hand trace $\vec{\traceval'}_{\!\!1..n}$ beginning with $\traceval'$ following $\vec{\node}$ and a witness $\vec{m}$ with $\vec{m}_1 = m$ such that 
  \begin{enumerate*}[label=\roman*)]
    \item 
    $\vec{m}$ supports $\vec{\traceval'}$ along $\vec{\node}$; and
    \item 
    $\vec{m}$ supports all left-hand traces $\vec{\traceval}$ that follow $\vec{\node}$.
  \end{enumerate*}
  Now, $\vec{\traceval'}$ cannot be infinite since if it were then, because $\prf$ is a valid cyclic proof, there would be an infinitely progressing left-hand trace following $\vec{\node}$ from which we could infer, by the properties of supported traces, the existence of an infinite descending chain of ordinals. Furthermore, there must exist a maximal such right-hand trace since if $\vec{\traceval'}_{\!\!1..n}$ were not maximal then one of the following two situations would hold. On the one hand, if $A_n \vdash C_n \equiv \SeqOf(\vec{\node}_n)$ is an instance of an axiom then it cannot be inconsistent since we have that $\vec{m}_n \models_{\Antecedents} A_n$; thus it must be that $\vec{\traceval'}_{n}$ is ground with respect to $\vec{\node}_{n}$, and so $\vec{\traceval'}$ is positve. On the other hand, if there exists a right-hand trace pair $(\vec{\traceval'}_{\!\!n}, \traceval'')$ for $\vec{\node}_n$ then by the Trace Descent Property, a support must exist for this longer trace.
\end{proof}

Supported traces have the following property, which expresses that the (reverse) sum of the progression steps in the trace bound (either from below, in the case of left-hand traces, or from above in the case of right-hand traces) the size of the models in the support as realizers of the values in the trace.

\begin{lemma}
\label{lem:TraceBoundLemma}
  If $\vec{m}$ supports a finite left-hand (resp.~right-hand) trace $\vec{\traceval}_{1..n}$ along a path $\vec{\node}$ then it follows that $\OrdinalOf(\vec{\traceval}_n, \vec{m}_n) + \ProgPointsOf_{\vec{\node}}(\vec{\traceval}) \sim \OrdinalOf(\vec{\traceval}_1, \vec{m}_1)$ where $\sim \defeq \leq$ (resp.~$\sim \defeq \geq$).
\end{lemma}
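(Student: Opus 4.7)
The plan is to proceed by induction on the length $n$ of the trace. The base case $n = 1$ is immediate: by \Cref{def:ProgressionSum}, $\ProgPointsOf_{\vec{\node}}(\vec{\traceval}) = 0$, so the claim reduces to $\OrdinalOf(\vec{\traceval}_1, \vec{m}_1) + 0 \sim \OrdinalOf(\vec{\traceval}_1, \vec{m}_1)$, which holds trivially.

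For the inductive step, suppose the result holds for supported traces of length $n-1$ and let $\vec{\traceval}_{1..n}$ be supported by $\vec{m}$ along $\vec{\node}$. The prefix $\vec{m}_{1..n-1}$ supports $\vec{\traceval}_{1..n-1}$ along $\vec{\node}_{1..n-1}$, since the conditions of \Cref{def:TraceSupport} for the prefix are obtained by restricting those for the full trace. Abbreviating $\alpha_i \defeq \TracePairsOf^{(\vec{\node}_{i}, \vec{\node}_{i+1})}_{k}(\vec{\traceval}_{i}, \vec{\traceval}_{i+1})$, the reverse-sum definition in \Cref{def:ProgressionSum} gives
\[ \ProgPointsOf_{\vec{\node}}(\vec{\traceval}_{1..n}) \;=\; \alpha_{n-1} + \ProgPointsOf_{\vec{\node}_{1..n-1}}(\vec{\traceval}_{1..n-1}), \]
and instantiating the support equation at $j = n-1$ yields $\OrdinalOf(\vec{\traceval}_n, \vec{m}_n) + \alpha_{n-1} \sim \OrdinalOf(\vec{\traceval}_{n-1}, \vec{m}_{n-1})$.

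Focus on the left-hand case (the right-hand case is entirely symmetric, with $\geq$ replacing $\leq$ throughout). Applying left-monotonicity of ordinal addition (i.e.\ $\beta \leq \gamma$ implies $\beta + \delta \leq \gamma + \delta$) with $\delta = \ProgPointsOf_{\vec{\node}_{1..n-1}}(\vec{\traceval}_{1..n-1})$, together with associativity, gives
\[ \OrdinalOf(\vec{\traceval}_n, \vec{m}_n) + \ProgPointsOf_{\vec{\node}}(\vec{\traceval}_{1..n}) \;\leq\; \OrdinalOf(\vec{\traceval}_{n-1}, \vec{m}_{n-1}) + \ProgPointsOf_{\vec{\node}_{1..n-1}}(\vec{\traceval}_{1..n-1}). \]
The right-hand side is bounded above by $\OrdinalOf(\vec{\traceval}_1, \vec{m}_1)$ by the inductive hypothesis applied to the prefix, so transitivity of $\leq$ completes the step.

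I do not anticipate a genuine obstacle here; the only subtle point, and the motivation for the reverse-sum convention in \Cref{def:ProgressionSum} flagged by the authors, is that ordinal addition is non-commutative and fails to be strictly monotone in its left argument. The reverse order is set up so that each newly introduced contribution $\alpha_{n-1}$ sits to the \emph{left} of the previously accumulated progress sum, exactly where the support condition places it, allowing the inductive hypothesis to be composed via the one form of monotonicity ($\leq$-preservation on the left) that ordinal addition does satisfy.
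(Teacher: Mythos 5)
Your proof is correct; it reaches the paper's conclusion by a mirrored decomposition. The paper peels off the \emph{first} trace pair and applies the inductive hypothesis to the suffix $\vec{\traceval}_{2..n}$ (supported by $\vec{m}_{2..}$ along $\vec{\node}_{2..}$), obtaining $\OrdinalOf(\vec{\traceval}_n, \vec{m}_n) + \ProgPointsOf_{\vec{\node}_{2..}}(\vec{\traceval}_{2..n}) \sim \OrdinalOf(\vec{\traceval}_2, \vec{m}_2)$, then appends $\TracePairsOf^{(\vec{\node}_1,\vec{\node}_2)}_{k}(\vec{\traceval}_1,\vec{\traceval}_2)$ on the right of both sides and closes with the support condition at $j=1$. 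You instead peel off the \emph{last} trace pair, apply the inductive hypothesis to the prefix $\vec{\traceval}_{1..n-1}$, and apply the monotonicity step to the support inequality at $j=n-1$ before closing with the inductive hypothesis. The two arguments rest on exactly the same ingredients---associativity of ordinal addition, weak monotonicity of $+$ in its left argument (i.e.\ $\beta \leq \gamma \Rightarrow \beta + \delta \leq \gamma + \delta$, which is all that is needed since only non-strict inequalities are propagated), and transitivity---and both are enabled by the reverse-sum convention of \Cref{def:ProgressionSum}, whose role you correctly identify: in either direction the newly exposed summand lands on the side of the accumulated sum where the relevant hypothesis places it. Neither decomposition buys anything over the other; your prefix version is equally rigorous, and your observation that the prefix of a support witness is itself a support witness is the right (and only) side condition to check.
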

\begin{proof}
  By induction on the length of the trace $n$.
  \begin{description}[font=\normalfont]
    \item[($n = 1$):]
    Immediate, since then $\ProgPointsOf_{\vec{\node}}(\vec{\traceval}) = 0$, $\vec{\traceval}_1 = \vec{\traceval}_n$ and $\vec{\node}_1 = \vec{\node}_n$.
    \item[($n = k + 1$):] 
    Then $\vec{m}_{2..}$ supports $\vec{\traceval}_{2..n}$ along $\vec{\node}_{2..}$, so by the inductive hypothesis
    \begin{equation*}
      \OrdinalOf(\vec{\traceval}_n, \vec{m}_n) + \ProgPointsOf_{\vec{\node}_{2..}}(\vec{\traceval}_{2..n}) \sim \OrdinalOf(\vec{\traceval}_2, \vec{m}_2)
    \end{equation*}
    Therefore, since ordinal addition is monotone in the right argument
    \begin{multline*}
      \OrdinalOf(\vec{\traceval}_n, \vec{m}_n) + \ProgPointsOf_{\vec{\node}_{2..}}(\vec{\traceval}_{2..n}) + \TracePairsOf^{(\vec{\node}_1, \vec{\node}_2)}_{1}(\vec{\traceval}_1, \vec{\traceval}_2) \\ {} \sim \OrdinalOf(\vec{\traceval}_2, \vec{m}_2) + \TracePairsOf^{(\vec{\node}_1, \vec{\node}_2)}_{1}(\vec{\traceval}_1, \vec{\traceval}_2)
    \end{multline*}
    By \Cref{def:ProgressionSum}, this gives
    \begin{equation*}
      \OrdinalOf(\vec{\traceval}_n, \vec{m}_n) + \ProgPointsOf_{\vec{\node}}(\vec{\traceval}) \sim \OrdinalOf(\vec{\traceval}_2, \vec{m}_2) + \TracePairsOf^{(\vec{\node}_1, \vec{\node}_2)}_{1}(\vec{\traceval}_1, \vec{\traceval}_2)
    \end{equation*}
    Furthermore, by the definition of trace supports (\Cref{def:TraceSupport})
    \begin{equation*}
      \OrdinalOf(\vec{\traceval}_2, \vec{m}_2) + \TracePairsOf^{(\vec{\node}_1, \vec{\node}_2)}_{1}(\vec{\traceval}_1, \vec{\traceval}_2) \sim \OrdinalOf(\vec{\traceval}_1, \vec{m}_1)
    \end{equation*}
    Thus the result holds by the transitivity of $\leq$ for the ordinals.
    \qedhere
  \end{description}
\end{proof}

This leads to the soundness of the trace value ordering relations of a cyclic proof. Intuitively, this result holds because when two trace values are related by a trace value ordering relation, for any given model, the upper bound on its size as a realizer of the consequent trace value is \emph{not} greater (or is strictly smaller) than the \emph{lower} bound on its size as a realizer of the antecedent trace value.

\begin{theorem}[Soundness of Trace Value Ordering]
  Let $\SeqOf(\node) = A \vdash C$ be a sequent in a cyclic proof $\prf$, with trace values $\traceval_1 \in \TraceValsOf(A)$ and $\traceval_2 \in \TraceValsOf(C)$; then, for ${\sim} \in \{ <, \leq \}$
  \begin{gather*}
    \traceval_2 \sim^{\node}_{\prf} \traceval_1 \Rightarrow \forall \, m \in \Models : m \models_{\Antecedents} A \wedge \isDefined{\OrdinalOf(\traceval_2, m)} \Rightarrow \OrdinalOf(\traceval_2, m) \sim \OrdinalOf(\traceval_1, m)
  \end{gather*}
\end{theorem}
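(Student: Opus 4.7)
The plan is to chain together the two key lemmas we now have available -- Supported Trace Existence and the Trace Bound Lemma -- sandwiching $\OrdinalOf(\traceval_2, m)$ between the progression sums of a matched pair of traces, and then closing the gap at the terminal end using either the grounded condition or the equated-by-sequent condition.

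First I would fix the model $m$ with $m \models_{\Antecedents} A$ and $\isDefined{\OrdinalOf(\traceval_2, m)}$. By \Cref{lem:SupportedTraceExistence}, I get a path $\vec{\node}$ rooted at $\node$, a positive maximal right-hand trace $\vec{\traceval'}_{\!\!1..n}$ with $\vec{\traceval'}_{\!\!1} = \traceval_2$, and a witness $\vec{m}$ with $\vec{m}_1 = m$ that supports $\vec{\traceval'}$ along $\vec{\node}$ \emph{and} simultaneously supports every left-hand trace that follows $\vec{\node}$. Unpacking $\traceval_2 \sim^{\node}_{\prf} \traceval_1$ on precisely this path gives me a left-hand trace $\vec{\traceval''}_{\!\!1..k}$ ($k \leq n$) with $\vec{\traceval''}_{\!\!1} = \traceval_1$ following a prefix of $\vec{\node}$, which is therefore also supported by (a prefix of) $\vec{m}$, together with the quantitative inequality $\ProgPointsOf_{\vec{\node}}(\vec{\traceval'}) \sim \ProgPointsOf_{\vec{\node}}(\vec{\traceval''})$ and one of the two ``terminal'' alternatives.

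Next I would apply \Cref{lem:TraceBoundLemma} in each direction to obtain the two bounds
\begin{gather*}
  \OrdinalOf(\traceval_2, m) \;\leq\; \OrdinalOf(\vec{\traceval'}_{\!\!n}, \vec{m}_n) + \ProgPointsOf_{\vec{\node}}(\vec{\traceval'}), \\
  \OrdinalOf(\vec{\traceval''}_{\!\!k}, \vec{m}_k) + \ProgPointsOf_{\vec{\node}}(\vec{\traceval''}) \;\leq\; \OrdinalOf(\traceval_1, m),
\end{gather*}
and then splice them together by handling the two cases from \Cref{def:TraceValueOrdering}(ii). If $\vec{\traceval'}$ is grounded, validity of $\SeqOf(\vec{\node}_n)$ (which holds because $\prf$ is a proof) and $\vec{m}_n \models_{\Antecedents} A_n$ force $\OrdinalOf(\vec{\traceval'}_{\!\!n}, \vec{m}_n) = \underline{1}$, after which $\underline{1} \leq \OrdinalOf(\vec{\traceval''}_{\!\!k}, \vec{m}_k)$ together with (weak) monotonicity of ordinal addition in the first argument bridges the terminal term. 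If instead $\vec{\node}_n$ is axiomatic with $k = n$ and $\vec{\traceval''}_{\!\!n} =_{\SeqOf(\vec{\node}_n)} \vec{\traceval'}_{\!\!n}$, then the equated-trace-values clause gives $\OrdinalOf(\vec{\traceval'}_{\!\!n}, \vec{m}_n) = \OrdinalOf(\vec{\traceval''}_{\!\!n}, \vec{m}_n)$ directly. In either case, combining with the progression inequality via right-monotonicity of ordinal addition yields $\OrdinalOf(\traceval_2, m) \leq \OrdinalOf(\traceval_1, m)$.

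For the strict variant ($\sim$ equal to $<$) I would use \emph{strict} right-monotonicity of ordinal addition to upgrade $\OrdinalOf(\vec{\traceval'}_{\!\!n},\vec{m}_n) + \ProgPointsOf_{\vec{\node}}(\vec{\traceval'}) < \OrdinalOf(\vec{\traceval'}_{\!\!n},\vec{m}_n) + \ProgPointsOf_{\vec{\node}}(\vec{\traceval''})$, keeping all the remaining steps as non-strict $\leq$'s. I expect the main obstacle is precisely this ordinal arithmetic: left-monotonicity of $+$ on the ordinals is only weak (e.g.\ $0 + \omega = 1 + \omega$), so one must be careful to put the strict inequality on the right argument, which is exactly what the reverse-sum convention of \Cref{def:ProgressionSum} was designed to enable. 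A secondary subtlety is to be explicit that validity of all sequents along $\vec{\node}$ (inherited from $\prf$ being a proof) is what licenses invoking the grounded and equated definitions at the terminal node.
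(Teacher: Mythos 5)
Your proposal is correct and follows essentially the same route as the paper's proof: invoke \Cref{lem:SupportedTraceExistence} to obtain a supported positive maximal right-hand trace and a matched left-hand trace from the ordering relation, close the terminal gap via the grounded/equated case split, and chain the two applications of \Cref{lem:TraceBoundLemma} using weak left- and (strict) right-monotonicity of ordinal addition. The paper's chain of inequalities is exactly the one you describe, including the placement of the strict step on the right argument of $+$.
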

\begin{proof}
  Suppose that $\traceval_2 \sim^{\node}_{\prf} \traceval_1$ and take an arbitrary model such that $m \models_{\Antecedents} A$ and $\isDefined{\OrdinalOf(\traceval_2, m)}$. By \Cref{lem:SupportedTraceExistence}, there exists a path $\vec{\node} \in \prf$ rooted at $\node$, a positive maximal right-hand trace $\vec{\traceval'}_{\!\!1..n}$ beginning with $\traceval_2$ and a witness $\vec{m}$ with $\vec{m}_1 = m$ such that $\vec{m}$ supports $\vec{\traceval'}$ along $\vec{\node}$.
  \par
  Since $\traceval_2 \sim^{\node}_{\prf} \traceval_1$ there is also a left-hand trace $\vec{\traceval}_{1..k}$, with $k \leq n$ and $\vec{\traceval}_1 = \traceval_1$, following $\vec{\node}$ and satisfying $\ProgPointsOf_{\vec{\node}}(\vec{\traceval'}) \sim \ProgPointsOf_{\vec{\node}}(\vec{\traceval})$. By the trace support existence lemma, we also have that $\vec{m}$ supports $\vec{\traceval}$ along $\vec{\node}$.
  \par
  There are now two possibilities: $\vec{\traceval'}$ is either fully or partially maximal. In case of the former, $\vec{\traceval'}_{\!\!n}$ is ground with respect to $\vec{\node}_{n}$, so $\OrdinalOf(\vec{\traceval'}_{\!\!n}, \vec{m}_{n}) = \underline{1}$ and therefore $\OrdinalOf(\vec{\traceval'}_{\!\!n}, \vec{m}_{n}) \leq \OrdinalOf(\vec{\traceval}_{k}, \vec{m}_{k})$. In case of the latter, if $\vec{\traceval'}_{\!\!n}$ is not ground with respect to $\vec{\node}_{n}$, then $k = n$ and $\vec{\traceval}_{n} =_{\SeqOf(\vec{\node}_{n})} \vec{\traceval'}_{\!\!n}$ therefore $\OrdinalOf(\vec{\traceval}_{n}, \vec{m}_{n}) = \OrdinalOf(\vec{\traceval'}_{\!\!n}, \vec{m}_{n})$, and so also $\OrdinalOf(\vec{\traceval'}_{\!\!n}, \vec{m}_{n}) \leq \OrdinalOf(\vec{\traceval}_{k}, \vec{m}_{k})$. Thus, the result derives from the following chain of inequalities:
  \begin{align*}
    \OrdinalOf(\vec{\traceval'}_{\!\!1}, \vec{m}_{1}) & \leq \OrdinalOf(\vec{\traceval'}_{\!\!n}, \vec{m}_{n}) + \ProgPointsOf_{\vec{\node}}(\vec{\traceval'}) && \text{(\Cref{lem:TraceBoundLemma})} \\
    & \leq \OrdinalOf(\vec{\traceval}_{k}, \vec{m}_{k}) + \ProgPointsOf_{\vec{\node}}(\vec{\traceval'}) && \text{(weak left monot. of $+$)} \\
    & \sim \OrdinalOf(\vec{\traceval}_{k}, \vec{m}_{k}) + \ProgPointsOf_{\vec{\node}}(\vec{\traceval}) && \text{(right monot. of $+$)} \\
    & \leq \OrdinalOf(\vec{\traceval}_1, \vec{m}_{1}) && \text{(\Cref{lem:TraceBoundLemma})}
    \qedhere
  \end{align*}
\end{proof}

To decide whether $\traceval_2 \leq^{\node}_{\prf} \traceval_1$ holds for some given node in a cyclic proof, we will encode the problem as a language containment problem between \emph{weighted} automata. Although the language containment problem for weighted automata is known to be undecidable in general \cite{Krob94,AlmagorBK11}, it is decidable for the sub-class of finite-valued weighted sum-automata \cite{FiliotGR14}.

Weighted automata \cite{Droste2009} generalise finite-state automata by assigning to each word a value that is taken from a (usually infinite) set of \emph{weights}, rather than simply a binary value indicating whether the word is included in the language or not. A weighted automaton $\mathscr{A}$ over an alphabet $\Sigma$ and a semiring $(V, \oplus, \otimes)$ of \emph{weights} is a tuple $(Q, q_{I}, F, \Delta, \gamma)$ where $Q$ is a set of \emph{states}, $q_{I} \in Q$ is the \emph{initial} state, $F \subseteq Q$ is the set of \emph{final} states, $\Delta \subseteq Q \times \Sigma \times Q$ is the transition relation, and $\gamma : \Delta \rightarrow V$ is a function assigning a weight to each transition. A \emph{run} $\rho$ of $\mathscr{A}$ over a word $w = \sigma_1 \ldots \sigma_n \in \Sigma^{\ast}$ is a sequence $q_0 \sigma_1 q_1 \ldots \sigma_n q_n$ such that $q_0 = q_{I}$ and $(q_{i-1}, \sigma_{i}, q_{i}) \in \Delta$ for all $i \in \{ 1, \ldots, n \}$. We write $\rho : q_0 \xrightarrow{w} q_n$ to denote that $\rho$ is a run over $w$ starting at state $q_0$ and ending at $q_n$. A run $\rho : q_0 \xrightarrow{w} q_n$ is called \emph{accepting} if $q_n \in F$. 
The \emph{value} $\mathsf{V}(\rho)$ of a run $\rho : q_0 \xrightarrow{\sigma_1 \ldots \sigma_n} q_n$ is defined as the semiring product of the weight of each transition in the run, i.e.~$\mathsf{V}(\rho) = \gamma(q_0, \sigma_1, q_1) \otimes \ldots \otimes \gamma(q_{n-1}, \sigma_{n}, q_{n})$, if $\rho$ is accepting, and $\mathsf{V}(\rho) = \bot$ otherwise\footnote{Note that we can always augment a semiring $(V, \oplus, \otimes)$ with a fresh zero element $\bot$, which can be used to denote an `undefined' value.}. We write $R_{\mathscr{A}}(w)$ to denote the set $\{ \mathsf{V}(\rho) \mid \text{$\rho$ is a run of $\mathscr{A}$ on $w$} \}$. When $R_{\mathscr{A}}(w)$ is non-empty then we say that $w$ is \emph{in the domain of $\mathscr{A}$} and write $w \in \mathsf{dom}(\mathscr{A})$. The \emph{quantitative language} $L_{\mathscr{A}}$ defined by $\mathscr{A}$ is a function $L_{\mathscr{A}} : \Sigma^{\ast} \rightarrow V$, defined by $L_{\mathscr{A}}(w) = \bot$ if $w \not\in \mathsf{dom}(\mathscr{A})$, and $L_{\mathscr{A}}(w) = \bigoplus R_{\mathscr{A}}(w)$ otherwise. Given two weighted automata $\mathscr{A}$ and $\mathscr{B}$, we write $L_{\mathscr{A}} \leq L_{\mathscr{B}}$ if and only if $L_{\mathscr{A}}(w) \leq L_{\mathscr{B}}(w)$ for all words $w \in \Sigma^{\ast}$. Furthermore, we will write $L_{\mathscr{A}} < L_{\mathscr{B}}$ if and only if $L_{\mathscr{A}}(w) < L_{\mathscr{B}}(w)$ for all words $w$ such that $L_{\mathscr{A}}(w) \neq \bot$.

The semiring over which we will construct our automata is the ordinal-valued \emph{max-plus} (tropical) semiring $(\Ordinals_{\bot}, \oplus, \otimes)$ where $\alpha \oplus \beta = \max(\alpha, \beta)$ with $\bot < \alpha$ for all $\alpha \in \Ordinals$, and $\alpha \otimes \beta = \beta + \alpha$ (where $+$ is the usual addition on ordinals) with $\bot \otimes \alpha = \alpha \otimes \bot = \bot$ for all $\alpha \in \Ordinals$.

\begin{definition}
\label{def:AutomataConstruction:Full}
  Let $\prf$ be a cyclic proof, and define the alphabet $\Sigma_{\prf} = \NodesOf(\prf) \cup (\wp(\TraceValsOf_{\Antecedents}(\prf)) \times \TraceValsOf_{\Consequents}(\prf))$, and for $\traceval \in \TraceValsOf_{\Consequents}(\node)$ let $T_{\node}(\traceval)$ denote the set defined by $T_{\node}(\traceval) = \{ \traceval' \mid \traceval' \in \TraceValsOf_{\Antecedents}(\node) \wedge \traceval' =_{\SeqOf(\node)} \traceval \}$. For each node $\node_{\text{init}} \in \prf$ and trace values $\traceval_1 \in \TraceValsOf_{\Antecedents}(\node_{\text{init}})$ and $\traceval_2 \in \TraceValsOf_{\Consequents}(\node_{\text{init}})$, we can construct two weighted automata over $\Sigma_{\prf}$ and $(\Ordinals_\bot, \oplus, \otimes)$ as follows:
  \begin{align*}
    \mathscr{A}^{(\prf, \node_{\text{init}})}_{(\traceval_1, \traceval_2)} &= 
      (Q_{\Antecedents}, q_{\Antecedents}, F_{\Antecedents}, \Delta^{\Antecedents}_{\text{init}} \cup \Delta^{\Antecedents}_1 \cup \Delta^{\Antecedents}_2 \cup \Delta^{\Antecedents}_3 \cup \Delta^{\Antecedents}_4, \gamma_{\Antecedents}) \\
    \mathscr{B}^{(\prf, \node_{\text{init}})}_{(\traceval_1, \traceval_2)} &= 
      (Q_{\Consequents}, q_{\Consequents}, F_{\Consequents}, \Delta^{\Consequents}_{\text{init}} \cup \Delta^{\Consequents}_1 \cup \Delta^{\Consequents}_2, \gamma_{\Consequents})
  \end{align*}
  where
  \begin{align*}
      Q_{\Antecedents} &= (\NodesOf(\prf) \times \TraceValsOf_{\Antecedents}(\prf)) \uplus \{ \bot \} \uplus \{ \top \} \uplus \{ q_{\Antecedents} \}
      \\
      Q_{\Consequents} &= (\NodesOf(\prf) \times \TraceValsOf_{\Consequents}(\prf)) \uplus \{ \bot \} \uplus \{ q_{\Consequents} \}
      \\
      F_{\Antecedents} &= Q_{\Antecedents} \setminus \{ q_{\Antecedents} \}
      \\
      F_{\Consequents} &= 
        \begin{aligned}[t]
          \{ \bot \} & \cup \{ (\node, \traceval) \mid \text{$\node$ not axiomatic, $\traceval$ terminal for $\node$,} \\
            & \hspace{5.1cm} \text{and $\node$ does not exclude $\traceval$} \} \\
           & \cup \{ (\node, \traceval) \mid \text{$\node$ axiomatic, $\traceval$ ground with respect to $\node$,} \\
            & \hspace{5.1cm} \text{and $\node$ does not exclude $\traceval$} \}
        \end{aligned}
      \\[0.5em]
      \Delta^{\Antecedents}_{\text{init}} &= \{ (q_{\Antecedents}, \node_{\text{init}}, (\node_{\text{init}}, \traceval_1)) \} \\
      \Delta^{\Antecedents}_1 &= \{ ((\node, \traceval), \node', (\node', \traceval')) \mid (\node, \node') \in \prf \wedge (\traceval, \traceval') \in \dom(\TracePairsOf^{(\node, \node')}_{1}) \}
      \\
      \Delta^{\Antecedents}_2 &= \{ ((\node, \traceval), (T_{\node}(\traceval'), \traceval'), \bot) \mid \text{$\node$ axiomatic} \wedge \traceval' \in \TraceValsOf_{\Consequents}(\node) \wedge \traceval \in T_{\node}(\traceval') \}
      \\
      \Delta^{\Antecedents}_3 &= \{ ((\node, \traceval), \node', \top) \mid (\node, \node') \in \prf \}
      \\
      \Delta^{\Antecedents}_4 &= \{ (\top, \node, \top) \mid \node \in \NodesOf(\prf) \}
      \\[0.5em]
      \Delta^{\Consequents}_{\text{init}} &= \{ (q_{\Consequents}, \node_{\text{init}}, (\node_{\text{init}}, \traceval_2)) \} \\
      \Delta^{\Consequents}_1 &= \{ ((\node, \traceval), \node', (\node', \traceval')) \mid (\node, \node') \in \prf \wedge (\traceval, \traceval') \in \dom(\TracePairsOf^{(\node, \node')}_{2}) \}
      \\
      \Delta^{\Consequents}_2 &= \{ ((\node, \traceval), (T_{\node}(\traceval), \traceval), \bot) \mid \text{$\traceval \in \TraceValsOf_{\Consequents}(\node)$, $\node$ axiomatic,} \\
        & \hspace{3.25cm} \text{$\traceval$ not ground w.r.t. $\node$, $\node$ does not exclude $\traceval$} \, \}
  \end{align*}
  \begin{gather*}
    \gamma_{\Antecedents}(q, \sigma, q') = \left\{ \begin{aligned}
        & \TracePairsOf^{(\node, \node')}_{1}(\traceval, \traceval') && \text{if $q = (\node, \traceval)$ and $q = (\node', \traceval')$} \\
        & 0 && \text{otherwise}
      \end{aligned} \right.
    \\[0.5em]
    \gamma_{\Consequents}(q, \sigma, q') = \left\{ \begin{aligned}
        & \TracePairsOf^{(\node, \node')}_{2}(\traceval, \traceval') && \text{if $q = (\node, \traceval)$ and $q = (\node', \traceval')$} \\
        & 0 && \text{otherwise}
      \end{aligned} \right.
  \end{gather*}
  We say that the automaton $\mathscr{B}^{(\prf, \node)}_{(\traceval_1, \traceval_2)}$ is \emph{grounded} whenever it holds that $\traceval$ is ground with respect to each \emph{reachable} final state $(\node, \traceval) \in F_{\Consequents}$.
\end{definition}

The language of $\mathscr{B}^{(\prf, \node)}_{(\traceval_1, \traceval_2)}$ contains all and only the paths in $\prf$ rooted at $\node$ followed by positive maximal right-hand traces starting with $\traceval_2$, and the value of a word is the maximum of the sizes of all such right-hand traces following the path it corresponds to. Furthermore, the language of $\mathscr{A}^{(\prf, \node)}_{(\traceval_1, \traceval_2)}$ contains all sequences of nodes (in $\prf$) containing a prefix which is a path in $\prf$ rooted at $\node$ and followed by a (maximal) left-hand trace starting with $\traceval_1$. In particular, it contains \emph{all} (finite) paths in $\prf$ rooted at $\node$, and the value of a path (word) is the maximum of the sizes of all left-hand traces following it. Thus, provided that $\mathscr{B}^{(\prf, \node)}_{(\traceval_1, \traceval_2)}$ is grounded, this construction then ensures that $L_{\mathscr{B}^{(\prf, \node)}_{(\traceval_1, \traceval_2)}} \leq L_{\mathscr{A}^{(\prf, \node)}_{(\traceval_1, \traceval_2)}}$ if and only if $\traceval_2 \leq_{\prf}^{\node} \traceval_1$.

\begin{lemma}
\label{lem:Automaton:Antecedent:Runs}
  If $\vec{\node}_{1..n}$ is a path in a cyclic proof $\prf$ and $\rho : q_{\Antecedents} \xrightarrow{\vec{\node}} q_{n}$ is a run of $\mathscr{A}^{(\prf, \node)}_{(\traceval, \traceval')}$ then there exists a left-hand trace $\vec{\traceval}_{1..k}$ ($k \leq n$) with $\vec{\traceval}_{1} = \traceval$ such that $q_i = (\vec{\node}_{i}, \vec{\traceval}_{i})$ for each $i \in \{ 1, \ldots, k \}$ and $q_{j} = \top$ for each $j \in \{ k+1, \ldots, n \}$. Moreover, if $q_n = (\vec{\node}_{n}, \traceval'')$ for some trace value $\traceval''$ then $k = n$ and $\traceval'' = \vec{\traceval}_{n}$.
\end{lemma}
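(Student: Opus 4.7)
The plan is to proceed by induction on the length $n$ of the path $\vec{\node}$. For the base case $n = 1$, the only transition available from $q_{\Antecedents}$ is the one in $\Delta^{\Antecedents}_{\text{init}}$, which reads the symbol $\node$ and lands in $(\node, \traceval)$. Hence $\vec{\node}_1 = \node$ and $q_1 = (\vec{\node}_1, \traceval)$, so taking $k = 1$ and $\vec{\traceval}_1 = \traceval$ satisfies the claim, with the moreover clause holding trivially.

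For the inductive step, consider a run $\rho$ on $\vec{\node}_{1..n+1}$, and apply the inductive hypothesis to its length-$n$ prefix $\rho' : q_{\Antecedents} \xrightarrow{\vec{\node}_{1..n}} q_n$ to obtain a left-hand trace $\vec{\traceval}_{1..k}$ ($k \leq n$, $\vec{\traceval}_1 = \traceval$) with the stated structure. The critical observation is that $\rho$ reads only node symbols; since the only transitions into $\bot$ lie in $\Delta^{\Antecedents}_2$ and consume a pair symbol of the form $(T_{\node}(\traceval'), \traceval')$, which is disjoint from $\NodesOf(\prf)$ in $\Sigma_{\prf}$, the state $\bot$ is never reached, and $q_n$ must be either $\top$ or a pair $(\vec{\node}_n, \vec{\traceval}_n)$ in accordance with the IH.

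A case analysis on $q_n$ then completes the argument. If $q_n = \top$, the only outgoing transitions from $\top$ lie in $\Delta^{\Antecedents}_4$, all of which go to $\top$ on any node symbol; thus $q_{n+1} = \top$ and $\vec{\traceval}_{1..k}$ still satisfies the claim for the extended run, with $k$ unchanged. If instead $q_n = (\vec{\node}_n, \vec{\traceval}_n)$ (whence $k = n$ by the moreover clause applied to $\rho'$), then the transition into $q_{n+1}$ must come from $\Delta^{\Antecedents}_1$ or $\Delta^{\Antecedents}_3$, as $\Delta^{\Antecedents}_2$ would require a non-node input. A $\Delta^{\Antecedents}_1$-transition yields $q_{n+1} = (\vec{\node}_{n+1}, \vec{\traceval}_{n+1})$ where, by definition of $\Delta^{\Antecedents}_1$, $(\vec{\traceval}_n, \vec{\traceval}_{n+1})$ is a left-hand trace pair for $(\vec{\node}_n, \vec{\node}_{n+1})$; appending $\vec{\traceval}_{n+1}$ produces a left-hand trace of length $n+1$ following $\vec{\node}_{1..n+1}$, and both clauses hold with $k = n+1$. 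A $\Delta^{\Antecedents}_3$-transition instead yields $q_{n+1} = \top$, leaving $k = n < n+1$.

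The main care point is justifying the invariant that $\bot$ is unreachable on node-only input, which is what licenses the exhaustive two-way case split on $q_n$; beyond this observation, the argument is a routine structural induction against the definition of the transition relations.
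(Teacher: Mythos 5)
Your proof is correct and is exactly the induction on $n$ that the paper intends (its proof is just ``By induction on $n$''): the base case forced by $\Delta^{\Antecedents}_{\text{init}}$, the observation that $\bot$ is unreachable because the input word contains only node symbols, and the case split on whether $q_n$ is $\top$ or a pair state, using the ``moreover'' clause of the inductive hypothesis to force $k=n$ in the latter case. No gaps.
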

\begin{proof}
  By induction on $n$.
\end{proof}

\begin{lemma}
\label{lem:Automaton:Faithful:Antecedent}
  Let $\vec{\node}_{1..n}$ be a path in a cyclic proof $\prf$ and let $\vec{\traceval}_{1..k}$ ($k \leq n$) be a left-hand trace; then $\vec{\traceval}$ follows $\vec{\node}$ if and only if there exists a run $\rho : q_{\Antecedents} \xrightarrow{\vec{\node}} q_{n}$ of $\mathscr{A}^{(\prf, \vec{\node}_{1})}_{(\vec{\traceval}_{1}, \traceval')}$ such that $q_{i} = (\vec{\node}_{i}, \vec{\traceval}_{i})$ for each $i \in \{ 1, \ldots, k \}$. Moreover it holds that $\mathsf{V}(\rho) \geq \ProgPointsOf_{\vec{\node}}(\vec{\traceval})$ and, furthermore, if $k = n$ or $q_{k+1} = \top$ then $\ProgPointsOf_{\vec{\node}}(\vec{\traceval}) \geq \mathsf{V}(\rho)$.
\end{lemma}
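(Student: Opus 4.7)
The plan is to establish the biconditional and the value (in)equality separately, leaning on \Cref{lem:Automaton:Antecedent:Runs} as the structural scaffold: that lemma tells us every run of $\mathscr{A}^{(\prf, \vec{\node}_{1})}_{(\vec{\traceval}_{1}, \traceval')}$ over the word $\vec{\node}_{1..n}$ decomposes into a ``trace phase'' whose states have the form $(\vec{\node}_{i}, \vec{\traceval'}_{\!\!i})$ for some left-hand trace $\vec{\traceval'}_{\!\!1..k'}$, followed (when $k' < n$) by a ``divergent phase'' parked at $\top$. This reduces the problem to identifying when the trace phase can be made to align with $\vec{\traceval}$ on its first $k$ positions.

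For the \emph{forward} direction, I will construct the canonical run explicitly: since $\vec{\traceval}$ follows $\vec{\node}$, each pair $(\vec{\traceval}_{i}, \vec{\traceval}_{i+1})$ with $i < k$ lies in $\dom(\TracePairsOf^{(\vec{\node}_{i}, \vec{\node}_{i+1})}_{1})$, yielding a $\Delta^{\Antecedents}_{1}$-edge; prepending the $\Delta^{\Antecedents}_{\text{init}}$-edge from $q_{\Antecedents}$ and, when $k < n$, appending one $\Delta^{\Antecedents}_{3}$-transition to $\top$ followed by $n - k - 1$ self-loops via $\Delta^{\Antecedents}_{4}$ produces the required accepting run. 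For the \emph{backward} direction I argue by elimination over the transition classes: for $i < k$, the edge $q_{i} \xrightarrow{\vec{\node}_{i+1}} q_{i+1}$ cannot lie in $\Delta^{\Antecedents}_{\text{init}}$ (wrong source), $\Delta^{\Antecedents}_{2}$ (its symbol would not be a node), $\Delta^{\Antecedents}_{3}$ (its target would be $\top$, not $(\vec{\node}_{i+1}, \vec{\traceval}_{i+1})$), or $\Delta^{\Antecedents}_{4}$ (wrong source), so it must lie in $\Delta^{\Antecedents}_{1}$, forcing $(\vec{\traceval}_{i}, \vec{\traceval}_{i+1}) \in \dom(\TracePairsOf^{(\vec{\node}_{i}, \vec{\node}_{i+1})}_{1})$ and hence that $\vec{\traceval}$ follows $\vec{\node}$.

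For the value claim I apply \Cref{lem:Automaton:Antecedent:Runs} to $\rho$ to extract a trace $\vec{\traceval'}_{\!\!1..k'}$ with $k \leq k' \leq n$ agreeing with $\vec{\traceval}$ on the first $k$ positions. Every transition whose source is $q_{\Antecedents}$ or $\top$ carries weight $0$ under $\gamma_{\Antecedents}$, so unwinding the tropical product $\alpha \otimes \beta = \beta + \alpha$ gives $\mathsf{V}(\rho) = \ProgPointsOf_{\vec{\node}}(\vec{\traceval'})$, matching the \emph{reverse} sum convention of \Cref{def:ProgressionSum}. Writing $w_{i} = \TracePairsOf^{(\vec{\node}_{i}, \vec{\node}_{i+1})}_{1}(\vec{\traceval'}_{\!\!i}, \vec{\traceval'}_{\!\!i+1})$, this value equals $(w_{k'-1} + \cdots + w_{k}) + \ProgPointsOf_{\vec{\node}}(\vec{\traceval})$, with the extra summands \emph{prepended on the left}; weak left monotonicity of ordinal addition ($\beta \leq \alpha + \beta$) then delivers $\mathsf{V}(\rho) \geq \ProgPointsOf_{\vec{\node}}(\vec{\traceval})$. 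When $k = n$ we have $k \leq k' \leq n = k$, forcing $k' = k$; when $q_{k+1} = \top$ the trace phase cannot extend past position $k$, so again $k' = k$; in either case the extra summands vanish and equality holds.

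The main obstacle I anticipate is aligning the tropical product with the reverse-sum convention of $\ProgPointsOf$: because $\otimes$ swaps its arguments, the additional weights accrued when the run's trace extends beyond $\vec{\traceval}$ get prepended on the \emph{left} of $\ProgPointsOf_{\vec{\node}}(\vec{\traceval})$, which is precisely why only weak (not strict) left-monotonicity is available and why the conclusion must combine a general inequality with explicit equality cases rather than a uniform strict bound; the remaining case analysis on the five transition classes is then routine.
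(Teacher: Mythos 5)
Your proof is correct. There is little to compare it against, because the paper proves this lemma only with the words ``By induction on $n$''; your write-up is a sound expansion of what that induction would have to contain, organised slightly differently. Rather than running a fresh induction, you delegate the inductive content to \Cref{lem:Automaton:Antecedent:Runs} (every run over a word of nodes splits into a trace phase of states $(\vec{\node}_{i}, \vec{\traceval'}_{\!\!i})$ followed by a $\top$ phase) and then argue directly: the forward direction by explicit construction from $\Delta^{\Antecedents}_{\text{init}}$, $\Delta^{\Antecedents}_{1}$, $\Delta^{\Antecedents}_{3}$ and $\Delta^{\Antecedents}_{4}$; the backward direction by elimination over the five transition classes (correctly observing that $\Delta^{\Antecedents}_{2}$ is excluded because its letters are not nodes); and the value claims by unwinding the tropical product $\alpha \otimes \beta = \beta + \alpha$ into the reverse sum of \Cref{def:ProgressionSum}. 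This factoring buys a cleaner separation of concerns and avoids duplicating the induction already performed for \Cref{lem:Automaton:Antecedent:Runs}, at the cost of depending on that lemma's exact statement (in particular that the run's trace $\vec{\traceval'}_{\!\!1..k'}$ extends $\vec{\traceval}$, so $k \leq k' \leq n$). The one genuinely delicate point --- that the surplus weights accrued when $k' > k$ are prepended on the \emph{left} of $\ProgPointsOf_{\vec{\node}}(\vec{\traceval})$, so that only weak left monotonicity $\beta \leq \alpha + \beta$ of ordinal addition is available, with equality recovered exactly when $k = n$ or $q_{k+1} = \top$ forces $k' = k$ --- is precisely the point you identify and handle correctly.
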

\begin{proof}
  By induction on $n$.
\end{proof}

\begin{lemma}
\label{lem:Automaton:Faithful:Consequent}
  Let $\vec{\node}$ be a finite sequence of nodes in a cyclic proof $\prf$ and let $\vec{\traceval}_{1..n}$ be a right-hand trace; then $\vec{\node}$ is a path in $\prf$ followed by $\vec{\traceval}$ if and only if $\rho = q_{\Consequents}\,\vec{\node}_{1}\,(\vec{\node}_{1}, \vec{\traceval}_{1})\,\ldots\,\vec{\node}_{n}\,(\vec{\node}_{n}, \vec{\traceval}_{n})$ is a run of $\mathscr{B}^{(\prf, \vec{\node}_{1})}_{(\traceval', \vec{\traceval}_{1})}$. Moreover $\mathsf{V}(\rho) = \ProgPointsOf_{\vec{\node}}(\vec{\traceval})$.
\end{lemma}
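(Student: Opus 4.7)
The plan is an induction on the trace length $n$, closely mirroring \Cref{lem:Automaton:Faithful:Antecedent} but considerably simpler because the runs in question stay within the subregion $\NodesOf(\prf) \times \TraceValsOf_{\Consequents}(\prf)$ of $Q_{\Consequents}$ and use only node-letters; consequently only the transition sets $\Delta^{\Consequents}_{\text{init}}$ and $\Delta^{\Consequents}_{1}$ can appear, and the transitions in $\Delta^{\Consequents}_{2}$ (together with the sink state $\bot$) never enter the argument. Note also that the parameter $\traceval'$ is irrelevant here, since it appears nowhere in the construction of the $\Delta^{\Consequents}_{i}$ or in $\gamma_{\Consequents}$.

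The base case $n = 1$ is immediate from the definition of $\Delta^{\Consequents}_{\text{init}}$: the run $q_{\Consequents}\,\vec{\node}_{1}\,(\vec{\node}_{1}, \vec{\traceval}_{1})$ exists in $\mathscr{B}^{(\prf, \vec{\node}_{1})}_{(\traceval', \vec{\traceval}_{1})}$ by construction, any singleton sequence of nodes is trivially a path along which the singleton trace $\vec{\traceval}_{1}$ trivially follows, and the run has value $\gamma_{\Consequents}(q_{\Consequents}, \vec{\node}_{1}, (\vec{\node}_{1}, \vec{\traceval}_{1})) = 0$, matching $\ProgPointsOf_{\vec{\node}}(\vec{\traceval}) = 0$. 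For the inductive step, split at the last position: the crucial observation is that membership of a triple $((\vec{\node}_{n-1}, \vec{\traceval}_{n-1}), \vec{\node}_{n}, (\vec{\node}_{n}, \vec{\traceval}_{n}))$ in $\Delta^{\Consequents}_{1}$ is by construction the exact conjunction of $(\vec{\node}_{n-1}, \vec{\node}_{n}) \in \prf$ and $(\vec{\traceval}_{n-1}, \vec{\traceval}_{n}) \in \dom(\TracePairsOf^{(\vec{\node}_{n-1}, \vec{\node}_{n})}_{2})$, which in turn is the conjunction of ``$\vec{\node}_{n}$ extends the path'' and ``$\vec{\traceval}_{n}$ extends the right-hand trace following that path''. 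Combining this equivalence with the inductive hypothesis applied to the length-$(n-1)$ prefix of $\rho$ yields both directions of the biconditional.

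The one subtle point---which I expect to be the main thing worth flagging---is the value equation. Because the max-plus product is defined with its arguments swapped, $\alpha \otimes \beta = \beta + \alpha$, extending the run by a final transition of weight $w = \TracePairsOf^{(\vec{\node}_{n-1}, \vec{\node}_{n})}_{2}(\vec{\traceval}_{n-1}, \vec{\traceval}_{n})$ sends the inductively-known value $V'$ to $V' \otimes w = w + V'$; that is, the new summand is \emph{prepended} to the ordinal sum. This is exactly the convention adopted in \Cref{def:ProgressionSum}, so the two reversals cancel and $\mathsf{V}(\rho) = \ProgPointsOf_{\vec{\node}}(\vec{\traceval})$ holds on the nose---which is precisely why the reverse summation in \Cref{def:ProgressionSum} was essential to begin with.
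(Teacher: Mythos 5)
Your proof is correct and follows exactly the route the paper intends: the paper's own proof is simply ``By induction on $n$,'' and your induction on the trace length, with the observation that membership of $((\vec{\node}_{n-1}, \vec{\traceval}_{n-1}), \vec{\node}_{n}, (\vec{\node}_{n}, \vec{\traceval}_{n}))$ in $\Delta^{\Consequents}_{1}$ is precisely the conjunction of the path condition and the trace-following condition, is the intended filling-in of that outline. Your remark that the reversed product $\alpha \otimes \beta = \beta + \alpha$ cancels against the reverse sum of \cref{def:ProgressionSum} is exactly the point the authors flag after that definition, so nothing is missing.
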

\begin{proof}
  By induction on $n$.
\end{proof}

\begin{corollary}
\label{cor:Automaton:Faithful:Consequent}
  $\vec{\traceval}_{1..n}$ is a positive maximal right-hand trace following $\vec{\node}$ in $\prf$ if and only if $\rho$ is an accepting run of $\mathscr{B}^{(\prf, \vec{\node}_{1})}_{(\traceval', \vec{\traceval}_{1})}$
such that either:
  \begin{enumerate}[nosep,label=\roman*)]
    \item 
    $\rho = q_{\Consequents}\,\vec{\node}_{1}\,(\vec{\node}_{1}, \vec{\traceval}_{1})\,\ldots\,\vec{\node}_{n}\,(\vec{\node}_{n}, \vec{\traceval}_{n})$ (in which case either $\vec{\traceval}$ is fully maximal or $\vec{\traceval}_{n}$ is ground with respect to $\vec{\node}_{n}$); or 
    \item
    $\rho = q_{\Consequents}\,\vec{\node}_{1}\,(\vec{\node}_{1}, \vec{\traceval}_{1})\,\ldots\,\vec{\node}_{n}\,(\vec{\node}_{n}, \vec{\traceval}_{n})\,(T_{\vec{\node}_n}(\vec{\traceval}_{n}), \vec{\traceval}_{n})\,\bot$ (in which case $\vec{\traceval}$ is partially maximal). 
  \end{enumerate}
  Moreover it holds that $\mathsf{V}(\rho) = \ProgPointsOf_{\vec{\node}}(\vec{\traceval})$.
\end{corollary}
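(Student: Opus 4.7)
The plan is to reduce the corollary to Lemma \ref{lem:Automaton:Faithful:Consequent} and then perform a case analysis on the acceptance conditions of $\mathscr{B}^{(\prf, \vec{\node}_{1})}_{(\traceval', \vec{\traceval}_{1})}$. By that lemma, $\vec{\traceval}_{1..n}$ follows $\vec{\node}$ in $\prf$ if and only if the prefix $\rho_0 = q_{\Consequents}\,\vec{\node}_{1}\,(\vec{\node}_{1}, \vec{\traceval}_{1})\,\ldots\,\vec{\node}_{n}\,(\vec{\node}_{n}, \vec{\traceval}_{n})$ is already a run of the automaton, with $\mathsf{V}(\rho_0) = \ProgPointsOf_{\vec{\node}}(\vec{\traceval})$. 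Since the only transitions out of a state of the form $(\vec{\node}_n, \vec{\traceval}_n)$ not used in constructing $\rho_0$ are the $\Delta^{\Consequents}_2$ edges into $\bot$, every accepting run must either coincide with $\rho_0$ (form (i)) or extend it by exactly one such edge (form (ii)). Both directions of the corollary therefore reduce to showing that the acceptance side-conditions coincide with positive maximality of $\vec{\traceval}$.

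For this, I would read off the defining clauses of $F_{\Consequents}$ and $\Delta^{\Consequents}_2$ from Definition \ref{def:AutomataConstruction:Full}. The state $(\vec{\node}_n, \vec{\traceval}_n)$ lies in $F_{\Consequents}$ in exactly two situations: either $\vec{\node}_n$ is non-axiomatic with $\vec{\traceval}_n$ terminal for $\vec{\node}_n$ and not excluded — which is precisely the statement that $\vec{\traceval}$ is fully maximal and positive — or $\vec{\node}_n$ is axiomatic with $\vec{\traceval}_n$ ground with respect to $\vec{\node}_n$ and not excluded — which is precisely the statement that $\vec{\traceval}$ is partially maximal and positive with a ground final value (terminality being automatic at an axiomatic node, which has no children). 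These two situations exhaust form (i). The $\Delta^{\Consequents}_2$ transition from $(\vec{\node}_n, \vec{\traceval}_n)$ to the final state $\bot$ is available precisely when $\vec{\node}_n$ is axiomatic, $\vec{\traceval}_n$ is not ground, and $\vec{\node}_n$ does not exclude $\vec{\traceval}_n$, which is exactly form (ii)'s condition that $\vec{\traceval}$ is partially maximal and positive with non-ground final value. Since "fully maximal", "partially maximal with ground tail", and "partially maximal with non-ground tail" are jointly exhaustive and mutually exclusive on maximal traces, and "positive" corresponds uniformly to the "not excluded" clause, the three sub-cases cover the positive maximal traces exactly once each.

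For the value claim, case (i) inherits $\mathsf{V}(\rho) = \ProgPointsOf_{\vec{\node}}(\vec{\traceval})$ directly from Lemma \ref{lem:Automaton:Faithful:Consequent} since $\rho = \rho_0$. In case (ii), the extra $\Delta^{\Consequents}_2$ transition falls into the "otherwise" branch of $\gamma_{\Consequents}$ and hence contributes weight $0$; since the max-plus product satisfies $\alpha \otimes 0 = 0 + \alpha = \alpha$, extending $\rho_0$ by this transition preserves its value, so again $\mathsf{V}(\rho) = \ProgPointsOf_{\vec{\node}}(\vec{\traceval})$. The main obstacle — and really the only non-mechanical part — is the bookkeeping needed to line up the clauses of $F_{\Consequents}$ and $\Delta^{\Consequents}_2$ with the various notions (\emph{fully} vs.\ \emph{partially} maximal, \emph{terminal}, \emph{ground}, \emph{excluded}) from Definition \ref{def:MaximalRighthandTraces} and the definition of grounded traces; once these correspondences are made explicit, the corollary is a direct reading of Lemma \ref{lem:Automaton:Faithful:Consequent}.
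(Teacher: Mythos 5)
Your proposal is correct and follows exactly the route the paper takes: the paper's own proof simply states that the corollary is immediate from \cref{lem:Automaton:Faithful:Consequent} together with the definitions of maximal right-hand traces and the automata construction, which is precisely the reduction-plus-bookkeeping you carry out. Your explicit matching of the clauses of $F_{\Consequents}$ and $\Delta^{\Consequents}_2$ to the fully/partially maximal and ground/non-ground cases (including the observation that terminality is automatic at axiomatic nodes, and that the $\Delta^{\Consequents}_2$ edge has weight $0$ so the value is preserved) just makes explicit what the paper leaves to the reader.
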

\begin{proof}
  Immediately from \cref{lem:Automaton:Faithful:Consequent} and the definitions of maximal right-hand traces (\cref{def:MaximalRighthandTraces}) and the automata constructions (\cref{def:AutomataConstruction:Full}).
\end{proof}

\begin{theorem}[Soundness and Completeness of the Automata Construction]
\label{thm:TraceValueOrdering-Automata-Equivalence}
  Let $\prf$ be a cyclic proof, $\node$ be a node in $\prf$, and $\traceval_{\Antecedents} \in \TraceValsOf_{\Antecedents}(\node)$ and $\traceval_{\Consequents} \in \TraceValsOf_{\Consequents}(\node)$ be trace values; then, for ${\sim} \in \{<, \leq \}$, $\traceval_{\Consequents} \sim^{\node}_{\prf} \traceval_{\Antecedents}$ if and only if $L_{\mathscr{B}^{(\prf, \node)}_{(\traceval_{\Antecedents}, \traceval_{\Consequents})}} \sim L_{\mathscr{A}^{(\prf, \node)}_{(\traceval_{\Antecedents}, \traceval_{\Consequents})}}$ and $\mathscr{B}^{(\prf, \node)}_{(\traceval_{\Antecedents}, \traceval_{\Consequents})}$ is grounded.
\end{theorem}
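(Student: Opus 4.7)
The plan is to match each positive maximal right-hand trace along a path rooted at $\node$ with an accepting run of $\mathscr{B}^{(\prf, \node)}_{(\traceval_\Antecedents, \traceval_\Consequents)}$ via \cref{cor:Automaton:Faithful:Consequent}, and each left-hand trace following such a path with an accepting run of $\mathscr{A}^{(\prf, \node)}_{(\traceval_\Antecedents, \traceval_\Consequents)}$ via \cref{lem:Automaton:Antecedent:Runs,lem:Automaton:Faithful:Antecedent}. Under the max-plus semantics, $L_{\mathscr{B}}(w)$ is then the maximum of $\ProgPointsOf_{\vec{\node}}(\vec{\traceval})$ over positive maximal right-hand traces $\vec{\traceval}$ encoded by $w$, while $L_{\mathscr{A}}(w)$ is the maximum of $\ProgPointsOf_{\vec{\node}}(\vec{\traceval'})$ over left-hand traces $\vec{\traceval'}$ compatible with $w$---possibly padded by $\top$-transitions when $w$ has no pair-marker suffix, or required to end with an equated value (via $\Delta^{\Antecedents}_2$) when it does. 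With this dictionary in place, the theorem reduces to showing that the ``$\forall$ right-hand, $\exists$ left-hand'' shape of \cref{def:TraceValueOrdering} matches the pointwise inequality of these maxima, where groundedness of $\mathscr{B}$ is what pins down the two disjuncts of condition (ii) of \cref{def:TraceValueOrdering} to the two cases (i) and (ii) of \cref{cor:Automaton:Faithful:Consequent}.

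For the forward direction, assume $\traceval_\Consequents \sim^{\node}_{\prf} \traceval_\Antecedents$. To see that $\mathscr{B}$ is grounded, I observe that any reachable final state of the form $(\node', \traceval')$ with $\node'$ non-axiomatic arises, by \cref{cor:Automaton:Faithful:Consequent}, from a fully maximal right-hand trace $\vec{\traceval}$ ending in $\traceval'$; the trace value ordering forces the first disjunct of \cref{def:TraceValueOrdering}(ii), making $\traceval'$ ground (the axiomatic clause of $F_\Consequents$ already demands groundedness). For the language inequality, fix $w \in \dom(\mathscr{B})$; each positive maximal right-hand trace $\vec{\traceval}$ encoded by $w$ supplies, via the ordering hypothesis, a left-hand trace $\vec{\traceval'}$ along the same path with $\ProgPointsOf_{\vec{\node}}(\vec{\traceval}) \sim \ProgPointsOf_{\vec{\node}}(\vec{\traceval'})$ and satisfying condition (ii). \Cref{lem:Automaton:Faithful:Antecedent}---extended by $\Delta^{\Antecedents}_3, \Delta^{\Antecedents}_4$ when $k < n$ in case (i), or by $\Delta^{\Antecedents}_2$ in case (ii), where the equation condition matches the definition of $T_{\vec{\node}_n}$---then produces an accepting $\mathscr{A}$-run on $w$ of value $\ProgPointsOf_{\vec{\node}}(\vec{\traceval'})$. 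Taking the maximum over runs yields $L_\mathscr{B}(w) \sim L_\mathscr{A}(w)$.

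For the converse, fix any positive maximal right-hand trace $\vec{\traceval}$ starting at $\traceval_\Consequents$ along a path $\vec{\node}$ rooted at $\node$, and let $w$ be the word it encodes; then $L_\mathscr{B}(w) \geq \ProgPointsOf_{\vec{\node}}(\vec{\traceval})$ and the hypothesis provides an accepting $\mathscr{A}$-run on $w$ of value $v \sim L_\mathscr{B}(w) \geq \ProgPointsOf_{\vec{\node}}(\vec{\traceval})$. \Cref{lem:Automaton:Antecedent:Runs,lem:Automaton:Faithful:Antecedent} extract a left-hand trace $\vec{\traceval'}$ from this run whose progression sum equals $v$ (the ``moreover'' clause applies since the run either terminates in $\top$ or reads the pair marker via $\Delta^{\Antecedents}_2$). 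In case (i) of the Corollary, groundedness of $\mathscr{B}$ forces $\vec{\traceval}_n$ to be ground, so $\vec{\traceval}$ is grounded (first disjunct of \cref{def:TraceValueOrdering}(ii)); in case (ii), the $\Delta^{\Antecedents}_2$ transition imposes $\vec{\traceval'}_n =_{\SeqOf(\vec{\node}_n)} \vec{\traceval}_n$ and $k = n$, matching the second disjunct. The main obstacle is bookkeeping the interplay between the two alphabet types in $\Sigma_\prf$---the bare path words versus those padded with a trace-value pair marker---and verifying that the strict variant ($<$) correctly lifts in both directions: forward, each right-hand witness is strictly bettered so the maximum on the left is strictly larger; conversely, every individual right-hand progression is bounded above by $L_\mathscr{B}(w)$, which is strictly less than the attained $\mathscr{A}$-value, delivering the pointwise strict inequality required by \cref{def:TraceValueOrdering}.
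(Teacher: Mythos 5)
Your proposal is correct and follows essentially the same route as the paper's own proof: both directions translate between traces and automaton runs via \cref{lem:Automaton:Antecedent:Runs,lem:Automaton:Faithful:Antecedent,cor:Automaton:Faithful:Consequent}, split on whether the accepting run of $\mathscr{B}$ carries the $\Delta^{\Consequents}_2$ pair-marker suffix, use groundedness of $\mathscr{B}$ to discharge the fully maximal case, and use the $T_{\node}$ sets to match the equated-trace-value disjunct of \cref{def:TraceValueOrdering}. The only differences are cosmetic (you case-split directly on the two run shapes of \cref{cor:Automaton:Faithful:Consequent} rather than first on partial versus full maximality, and you argue the language inequality by maximising over all $\mathscr{B}$-runs rather than picking a maximally valued one), so nothing further is needed.
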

\begin{proof}
  \begin{description}
    \item[(if):]
    Let $\vec{\traceval}_{1..n}$ be a positive maximal right-hand trace with $\vec{\traceval}_1 = \traceval_{\Consequents}$ following a path $\vec{\node}$ in $\prf$ rooted at $\node$. 
    We now consider the following two (exhaustive) possibilities:
    \begin{description}
      \item[($\vec{\traceval}$ is partially maximal):]
      thus $\vec{\node}_{n}$ is axiomatic and by \cref{cor:Automaton:Faithful:Consequent}, there is an accepting run $\rho$ of $\mathscr{B}^{(\prf, \node)}_{(\traceval_{\Antecedents}, \traceval_{\Consequents})}$ such that $\mathsf{V}(\rho) = \ProgPointsOf_{\vec{\node}}(\vec{\traceval})$ and one of the following two cases holds:
      \begin{enumerate}[label=\roman*),ref=\roman*]
        \item 
        \label{enumitem:AutomataConstructionSoundness:case:RighthandTraceRooted}
        $\rho = q_{\Consequents}\,\vec{\node}_{1}\,(\vec{\node}_{1}, \vec{\traceval}_{1})\,\ldots\,\vec{\node}_{n}\,(\vec{\node}_{n}, \vec{\traceval}_{n})$ and so $\vec{\traceval}_{n}$ is ground with respect to $\vec{\node}_{n}$. Since $L_{\mathscr{B}^{(\prf, \node)}_{(\traceval_{\Antecedents}, \traceval_{\Consequents})}} \sim L_{\mathscr{A}^{(\prf, \node)}_{(\traceval_{\Antecedents}, \traceval_{\Consequents})}}$, there exists an accepting run $\rho' : q_{\Antecedents} \xrightarrow{\vec{\node}_{1..n}} q$ of $\mathscr{A}^{(\prf, \node)}_{(\traceval_{\Antecedents}, \traceval_{\Consequents})}$ such that $\mathsf{V}(\rho) \sim \mathsf{V}(\rho')$. By \cref{lem:Automaton:Antecedent:Runs,lem:Automaton:Faithful:Antecedent} it then follows that there exists a left-hand trace $\vec{\traceval'}_{1..k}$ ($k \leq n$) with $\vec{\traceval'}_{\!\!1} = \traceval_{\Antecedents}$ following $\vec{\node}$ such that $\mathsf{V}(\rho') = \ProgPointsOf_{\vec{\node}}(\vec{\traceval'})$, and thus that $\ProgPointsOf_{\vec{\node}}(\vec{\traceval}) \sim \ProgPointsOf_{\vec{\node}}(\vec{\traceval'})$. Therefore, the conditions of \cref{def:TraceValueOrdering} are met for $\vec{\traceval}$ and $\vec{\node}$.
        \item 
        $\rho = q_{\Consequents}\,\vec{\node}_{1}\,(\vec{\node}_{1}, \vec{\traceval}_{1})\,\ldots\,\vec{\node}_{n}\,(\vec{\node}_{n}, \vec{\traceval}_{n})\,(T_{\vec{\node}_n}(\vec{\traceval}_{n}), \vec{\traceval}_{n})\,\bot$. Then, since $L_{\mathscr{B}^{(\prf, \node)}_{(\traceval_{\Antecedents}, \traceval_{\Consequents})}} \sim L_{\mathscr{A}^{(\prf, \node)}_{(\traceval_{\Antecedents}, \traceval_{\Consequents})}}$, it follows $\rho' : q_{\Antecedents} \xrightarrow{\vec{\node}_{1..n} \cdot (T_{\vec{\node}_n}(\vec{\traceval}_{n}), \vec{\traceval}_{n})} q$ is an accepting run of $\mathscr{A}^{(\prf, \node)}_{(\traceval_{\Antecedents}, \traceval_{\Consequents})}$ such that $\mathsf{V}(\rho) \sim \mathsf{V}(\rho')$. By construction, $q = \bot$ and $\rho'' : q_{\Antecedents} \xrightarrow{\vec{\node}_{1..n}} (\vec{\node}_{n}, \traceval')$ is a run of $\mathscr{A}^{(\prf, \node)}_{(\traceval_{\Antecedents}, \traceval_{\Consequents})}$ for some $\traceval'$ and therefore, by \cref{lem:Automaton:Antecedent:Runs,lem:Automaton:Faithful:Antecedent}, there exists a left-hand trace $\vec{\traceval'}_{\!\!1..n}$ with $\vec{\traceval'}_{\!\!1} = \traceval_{\Antecedents}$ and $\vec{\traceval'}_{\!\!n} = \traceval'$ such that $\vec{\traceval'}$ follows $\vec{\node}$ and $\mathsf{V}(\rho'') = \ProgPointsOf_{\vec{\node}}(\vec{\traceval'})$. Also by construction, we have that $\traceval' \in T_{\vec{\node}_{n}}(\vec{\traceval}_{n})$ and therefore that $\vec{\traceval'}_{\!\!n} =_{\SeqOf(\vec{\node}_{n})} \vec{\traceval}_{n}$. Now notice
        \begin{align*}
          \mathsf{V}(\rho') & = \mathsf{V}(\rho'') \otimes \gamma_{\Antecedents}((\vec{\node}_{n}, \vec{\traceval'}_{\!\!n}), (T_{\vec{\node}_n}(\vec{\traceval}_{n}), \vec{\traceval}_{n}), \bot) \\
          & = \mathsf{V}(\rho'') \otimes 0 \\
          & = 0 + \mathsf{V}(\rho'') = \mathsf{V}(\rho'')
        \end{align*}
      and therefore that $\ProgPointsOf_{\vec{\node}}(\vec{\traceval}) \sim \ProgPointsOf_{\vec{\node}}(\vec{\traceval'})$. Thus, the conditions of \cref{def:TraceValueOrdering} are met for $\vec{\traceval}$ and $\vec{\node}$.
      \end{enumerate}
      \item[($\vec{\traceval}$ is fully maximal):]
      thus $\vec{\node}_{n}$ is axiomatic and by \cref{cor:Automaton:Faithful:Consequent}, there is an accepting run $\rho = q_{\Consequents}\,\vec{\node}_{1}\,(\vec{\node}_{1}, \vec{\traceval}_{1})\,\ldots\,\vec{\node}_{n}\,(\vec{\node}_{n}, \vec{\traceval}_{n})$ of $\mathscr{B}^{(\prf, \node)}_{(\traceval_{\Antecedents}, \traceval_{\Consequents})}$ such that $\mathsf{V}(\rho) = \ProgPointsOf_{\vec{\node}}(\vec{\traceval})$. Since $\mathscr{B}^{(\prf, \node)}_{(\traceval_{\Antecedents}, \traceval_{\Consequents})}$ is grounded it follows that $\vec{\traceval}_{n}$ is ground with respect to $\vec{\node}_{n}$. The remainder of this case proceeds as for item (\ref*{enumitem:AutomataConstructionSoundness:case:RighthandTraceRooted}) above.
    \end{description}
    Thus, since the conditions of \cref{def:TraceValueOrdering} are met for an arbitrary positive maximal right-hand trace $\vec{\traceval}$ and path $\vec{\node}$ in $\prf$, it follows that $\traceval_{\Consequents} \sim^{\node}_{\prf} \traceval_{\Antecedents}$.
    \item[(only if):]
    We first show $L_{\mathscr{B}^{(\prf, \node)}_{(\traceval_{\Antecedents}, \traceval_{\Consequents})}} \sim L_{\mathscr{A}^{(\prf, \node)}_{(\traceval_{\Antecedents}, \traceval_{\Consequents})}}$. Take any $w \in \dom\left(\mathscr{B}^{(\prf, \node)}_{(\traceval_{\Antecedents}, \traceval_{\Consequents})}\right)$; there must exist a maximally valued run $\rho$ of $\mathscr{B}^{(\prf, \node)}_{(\traceval_{\Antecedents}, \traceval_{\Consequents})}$ over $w$. Then $L_{\mathscr{B}^{(\prf, \node)}_{(\traceval_{\Antecedents}, \traceval_{\Consequents})}}(w) = \mathsf{V}(\rho)$ since $\rho$ is maximally valued. Notice that $\rho$ may take one of two forms:
    \begin{description}
      \item[$\rho = q_{\Consequents} \, \node_1 \, (\node_1, \traceval_1) \, \ldots \, \node_n \, (\node_n, \traceval_n)$] with $\node_1 = \node$ and $\traceval_1 = \traceval_{\Consequents}$. Then by \cref{cor:Automaton:Faithful:Consequent} $\vec{\node} = \node_1 \, \ldots \, \node_n$ is a path in $\prf$ followed by right-hand trace $\vec{\traceval} = \traceval_1 \, \ldots \, \traceval_n$ with $\vec{\traceval}$ a positive maximal trace, and $\mathsf{V}(\rho) = \ProgPointsOf_{\vec{\node}}(\vec{\traceval})$. Since $\traceval_{\Consequents} \sim^{\node}_{\prf} \traceval_{\Antecedents}$ it follows that there exists a left-hand trace $\vec{\traceval'}_{\!\!1..k}$ following $\vec{\node}$ with $k \leq n$ and $\vec{\traceval'}_{\!\!1} = \traceval_{\Antecedents}$, and furthermore that $\ProgPointsOf_{\vec{\node}}(\vec{\traceval}) \sim \ProgPointsOf_{\vec{\node}}(\vec{\traceval'})$. Therefore by \cref{lem:Automaton:Faithful:Antecedent} there is a run $\rho' : q_{\Antecedents} \xrightarrow{\vec{\node}} q_n$ of $\mathscr{A}^{(\prf, \node)}_{(\traceval_{\Antecedents}, \traceval_{\Consequents})}$ such that $\ProgPointsOf_{\vec{\node}}(\vec{\traceval'}) \leq \mathsf{V}(\rho')$. Notice that $\rho'$ is an accepting run since all (non-initial) states in $\mathscr{A}^{(\prf, \node)}_{(\traceval_{\Antecedents}, \traceval_{\Consequents})}$ are accepting. So, by definition, $\mathsf{V}(\rho') \leq L_{\mathscr{A}^{(\prf, \node)}_{(\traceval_{\Antecedents}, \traceval_{\Consequents})}}(\vec{\node})$. Therefore
      \begin{multline*}
        L_{\mathscr{B}^{(\prf, \node)}_{(\traceval_{\Antecedents}, \traceval_{\Consequents})}}(\vec{\node}) = \mathsf{V}(\rho) = \ProgPointsOf_{\vec{\node}}(\vec{\traceval}) \\
        {} \sim \ProgPointsOf_{\vec{\node}}(\vec{\traceval'}) \leq \mathsf{V}(\rho') \leq L_{\mathscr{A}^{(\prf, \node)}_{(\traceval_{\Antecedents}, \traceval_{\Consequents})}}(\vec{\node})
      \end{multline*}
      
      \item[$\rho = q_{\Consequents} \, \node_1 \, (\node_1, \traceval_1) \, \ldots \, \node_n \, (\node_n, \traceval_n) \, (T_{\vec{\node}_n}(\vec{\traceval}_{n}), \vec{\traceval}_{n}) \, \bot$] with $\node_1 = \node$ and $\traceval_1 = \traceval_{\Consequents}$. Then by \cref{cor:Automaton:Faithful:Consequent} $\vec{\node} = \node_1 \, \ldots \, \node_n$ is a path in $\prf$ followed by right-hand trace $\vec{\traceval} = \traceval_1 \, \ldots \, \traceval_n$ with $\vec{\traceval}$ a positive maximal trace and $\mathsf{V}(\rho) = \ProgPointsOf_{\vec{\node}}(\vec{\traceval})$. By construction, since there is a transition from $(\node_n, \traceval_n)$ to $\bot$, it follows that $\node_n$ is axiomatic and that $\traceval_n$ is \emph{not} ground with respect to $\node_n$. Thus, since $\traceval_{\Consequents} \sim^{\node}_{\prf} \traceval_{\Antecedents}$, there exists a left-hand trace $\vec{\traceval'}_{\!\!1..n}$ following $\vec{\node}$ with $\vec{\traceval'}_{\!\!1} = \traceval_{\Antecedents}$ such that $\ProgPointsOf_{\vec{\node}}(\vec{\traceval}) \sim \ProgPointsOf_{\vec{\node}}(\vec{\traceval'})$ and $\vec{\traceval'}_{\!\!n} =_{\SeqOf(\vec{\node}_{n})} \vec{\traceval}_{n}$; it therefore follows that $\vec{\traceval'}_{\!\!n} \in T_{\vec{\node}_{n}}(\vec{\traceval}_{n})$. Now by \cref{lem:Automaton:Faithful:Antecedent} there exists a run $\rho' = q_{\Antecedents} \, \vec{\node}_{1} \, (\vec{\node}_{1}, \vec{\traceval'}_{\!\!1}) \, \ldots \, \vec{\node}_{n} \, (\vec{\node}_{n}, \vec{\traceval'}_{\!\!n})$ such that $\mathsf{V}(\rho') = \ProgPointsOf_{\vec{\node}}(\vec{\traceval'})$. Moreover by construction there is a transition in $\mathscr{A}^{(\prf, \node)}_{(\traceval_{\Antecedents}, \traceval_{\Consequents})}$ from $(\vec{\node}_{n}, \vec{\traceval'}_{\!\!n})$ to $\bot$ via $(T_{\vec{\node}_n}(\vec{\traceval}_{n}), \vec{\traceval}_{n})$. Thus $\rho''$ is a run of $\mathscr{A}^{(\prf, \node)}_{(\traceval_{\Antecedents}, \traceval_{\Consequents})}$ over $w$, where 
      \begin{equation*}
        \rho'' = q_{\Antecedents} \, \vec{\node}_{1} \, (\vec{\node}_{1}, \vec{\traceval'}_{\!\!1}) \, \ldots \, \vec{\node}_{n} \, (\vec{\node}_{n}, \vec{\traceval'}_{\!\!n}) \, (T_{\vec{\node}_n}(\vec{\traceval}_{n}), \vec{\traceval}_{n}) \, \bot
      \end{equation*}        
      Notice that it is an accepting run since all (non-initial) states in $\mathscr{A}^{(\prf, \node)}_{(\traceval_{\Antecedents}, \traceval_{\Consequents})}$ are accepting, and moreover that by construction $\mathsf{V}(\rho'') = \mathsf{V}(\rho')$. Also, by definition it holds that $\mathsf{V}(\rho'') \leq L_{\mathscr{A}^{(\prf, \node)}_{(\traceval_{\Antecedents}, \traceval_{\Consequents})}}(w)$. Therefore
      \begin{multline*}
        L_{\mathscr{B}^{(\prf, \node)}_{(\traceval_{\Antecedents}, \traceval_{\Consequents})}}(w) = \mathsf{V}(\rho) = \ProgPointsOf_{\vec{\node}}(\vec{\traceval}) \\
        {} \sim \ProgPointsOf_{\vec{\node}}(\vec{\traceval'}) = \mathsf{V}(\rho'') \leq L_{\mathscr{A}^{(\prf, \node)}_{(\traceval_{\Antecedents}, \traceval_{\Consequents})}}(w)
      \end{multline*}
      
    \end{description}
    To see that $\mathscr{B}^{(\prf, \node)}_{(\traceval_{\Antecedents}, \traceval_{\Consequents})}$ is grounded, consider an arbitrary reachable final state $(\node', \traceval')$ of $\mathscr{B}^{(\prf, \node)}_{(\traceval_{\Antecedents}, \traceval_{\Consequents})}$ where $\node'$ is not axiomatic (in the case that $\node'$ is axiomatic we have by construction that $\traceval'$ is ground with respect to $\node'$). Since $(\node', \traceval')$ is reachable, there is accepting run $q = q_{\Consequents} \, \node_1 \, (\node_1, \traceval_1) \, \ldots \, \node_n \, (\node_n, \traceval_n)$ with $\node' = \node_n$ and $\traceval' = \traceval_n$. Thus by \cref{cor:Automaton:Faithful:Consequent} it follows that $\vec{\node} = \node_1 \, \ldots \, \node_n$ is a path in $\prf$ followed by positive maximal right-hand trace $\vec{\traceval} = \traceval_1 \, \ldots \, \traceval_n$. Since $\traceval_{\Consequents} \sim^{\node}_{\prf} \traceval_{\Antecedents}$ it follows that $\traceval_n$ is ground with respect to $\node_n$. Since the choice of reachable final state was arbitrary, we have that $\mathscr{B}^{(\prf, \node)}_{(\traceval_{\Antecedents}, \traceval_{\Consequents})}$ is grounded.
    \qedhere
  \end{description}
\end{proof}

We now show that when certain constraints are placed on the trace pair function $\vec{\TracePairsOf}$, the constructions of \cref{def:AutomataConstruction:Full} result in automata that are \emph{finitely ambiguous}. That is, the number of runs of the automata on any given word is bounded. Finitely ambiguous automata (over certain semirings) have decidable containment \cite{FiliotGR14}.

\begin{definition}
  We say that a trace pair function $\vec{\TracePairsOf}$ is \emph{trace injective} if every $\TracePairsOf^{(r, (S, \vec{S}), i)}_{k}$ satisfies the following, for every $\traceval$, $\traceval'$, and $\traceval''$:
  \begin{equation*}
    (\traceval', \traceval) \in \dom(\TracePairsOf^{(r, (S, \vec{S}), i)}_{k}) \wedge (\traceval'', \traceval) \in \dom(\TracePairsOf^{(r, (S, \vec{S}), i)}_{k}) \Rightarrow \traceval' = \traceval''
  \end{equation*}
\end{definition}

\begin{lemma}
\label{lem:TraceInjectivityImpliesBoundedTraces}
  Suppose $\vec{\TracePairsOf}$ is trace injective and let $\node$ be a node in a cyclic proof $\prf$ with $\traceval \in \TraceValsOf_{\Antecedents}(\node)$ (resp.~$\traceval \in \TraceValsOf_{\Consequents}(\node)$) a trace value in $\node$; then for all paths $\vec{\node}_{1..n}$ rooted at $\node$ and left-hand (resp.~right-hand) traces $\vec{\traceval}_{1..n}$ and $\vec{\traceval'}_{1..n}$ following $\vec{\node}$ with $\vec{\traceval}_1 = \vec{\traceval'}_1 = \traceval$, if $\vec{\traceval}_n = \vec{\traceval'}_n$ then $\vec{\traceval} = \vec{\traceval'}$.
\end{lemma}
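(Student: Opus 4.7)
The plan is to prove this by a straightforward induction, exploiting the fact that trace injectivity is exactly the statement that, for a fixed rule instance and premise, the \emph{predecessor} component of a trace pair is determined by its \emph{successor} component. Given a common starting value and identical endpoints, we can peel off trace pairs from the right end and force equality position by position.

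First I would induct on the length $n$ of the path. The base case $n = 1$ is immediate since then both traces consist of just the single value $\traceval$, so $\vec{\traceval} = \vec{\traceval'}$ trivially. For the inductive step, suppose the claim holds for paths of length $n$ and consider paths of length $n + 1$ with traces $\vec{\traceval}_{1..n+1}$ and $\vec{\traceval'}_{\!\!1..n+1}$ following $\vec{\node}_{1..n+1}$, sharing the same start and end. Because $\vec{\traceval}$ and $\vec{\traceval'}$ follow $\vec{\node}$, the pairs $(\vec{\traceval}_{n}, \vec{\traceval}_{n+1})$ and $(\vec{\traceval'}_{\!\!n}, \vec{\traceval'}_{\!\!n+1})$ both lie in $\dom(\TracePairsOf^{(\vec{\node}_{n}, \vec{\node}_{n+1})}_{k})$ for the appropriate $k \in \{1, 2\}$. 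Since $\vec{\traceval}_{n+1} = \vec{\traceval'}_{\!\!n+1}$ by hypothesis, trace injectivity applied to this trace pair function yields $\vec{\traceval}_{n} = \vec{\traceval'}_{\!\!n}$.

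With the endpoints now coinciding one step earlier, I would apply the inductive hypothesis to the truncated traces $\vec{\traceval}_{1..n}$ and $\vec{\traceval'}_{\!\!1..n}$ along $\vec{\node}_{1..n}$: they start at $\traceval$, end at the common value $\vec{\traceval}_{n} = \vec{\traceval'}_{\!\!n}$, and each follows $\vec{\node}_{1..n}$ (following is a local property, so truncating a trace that follows a path yields a trace following the truncated path). The hypothesis gives $\vec{\traceval}_{1..n} = \vec{\traceval'}_{\!\!1..n}$, which combined with $\vec{\traceval}_{n+1} = \vec{\traceval'}_{\!\!n+1}$ completes the induction. The argument is symmetric for the left-hand and right-hand cases, differing only in whether $k = 1$ or $k = 2$.

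There is essentially no obstacle here: the result is a direct structural consequence of the definition of trace injectivity, and no appeal to the semantics, ordinal trace function, or any earlier lemma is needed. The only subtlety worth flagging in the write-up is that one must state clearly at the outset which $k$ is in play (matching left-hand versus right-hand traces) so that the invocation of trace injectivity at each step refers to the correct projection $\TracePairsOf^{(\vec{\node}_{i}, \vec{\node}_{i+1})}_{k}$.
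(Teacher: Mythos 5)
Your proof is correct and follows exactly the paper's own argument: induction on the path length, using trace injectivity on the final trace pair to propagate the equality of endpoints one step backwards, then applying the inductive hypothesis to the truncated traces. The remark about tracking the correct projection $k$ for left- versus right-hand traces matches the paper's handling as well.
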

\begin{proof}
  By induction on $n$. The base case ($n = 1$) is immediate. For the inductive step, assume a path $\vec{\node}_{1..(i+1)}$ rooted at $\node$ followed by left-hand (resp.~right-hand) traces $\vec{\traceval}_{1..(i+1)}$ and $\vec{\traceval'}_{1..(i+1)}$ with $\vec{\traceval}_1 = \vec{\traceval'}_1 = \traceval$; furthermore, assume that $\vec{\traceval}_{i+1} = \vec{\traceval'}_{i+1}$. Since both $\vec{\traceval}$ and $\vec{\traceval'}$ follow $\vec{\node}$ we have that $(\vec{\traceval}_{i}, \vec{\traceval}_{i+1}) \in \dom(\TracePairsOf^{(\vec{\node}_{i}, \vec{\node}_{i+1})}_{k})$ and $(\vec{\traceval'}_{i}, \vec{\traceval'}_{i+1}) \in \dom(\TracePairsOf^{(\vec{\node}_{i}, \vec{\node}_{i+1})}_{k})$ for $k = 1$ (resp.~$k = 2$). Then, since $\vec{\traceval}_{i+1} = \vec{\traceval'}_{i+1}$ and $\TracePairsOf$ is trace injective, it follows that $\vec{\traceval}_{i} = \vec{\traceval'}_{i}$. Then by the inductive hypothesis we have that $\vec{\traceval}_{1..i} = \vec{\traceval'}_{1..i}$, whence the result follows.
\end{proof}

The significance of \cref{lem:TraceInjectivityImpliesBoundedTraces} is that when the trace pair function is trace injective there is a bound on the number of possible traces following any given path. As an immediate corollary, every automaton $\mathscr{B}^{(\prf, \node)}_{(\traceval_1, \traceval_2)}$ is finitely ambiguous. 
However, even when the trace pair function is trace injective it is \emph{not} necessarily the case that a left-hand trace automaton $\mathscr{A}^{(\prf, \node)}_{(\traceval_1, \traceval_2)}$ is finitely ambiguous. 
The reason for this is the presence of the `sink' state $\top$.
When a proof contains a (left-hand) trace cycle (of the form $(n_1, \traceval_1) \ldots (n_{j}, \traceval_{j})$ with nodes $n_{1} = n_{j}$ and trace values $\traceval_{1} = \traceval_{j}$), the resulting left-hand trace automaton will contain the following configuration of states:
\begin{center}
  \footnotesize
  \begin{tikzpicture}
    \node[draw,circle,inner sep=0pt,minimum size=3.5em] (firststate) at (-5cm,0) {\makebox{$\begin{gathered}n_{1}\\\traceval_{1}\end{gathered}$}};
    \node (dots) at (-3.125cm,0) {$\ldots$} ;
    \node[draw,circle,inner sep=0pt,minimum size=3.5em] (finalstate) at (-1.25cm,0) {\makebox{$\begin{gathered}n_{j-1}\\\traceval_{j-1}\end{gathered}$}};
    \node[draw,circle,inner sep=0pt,minimum size=3.5em] (topstate) at (1.25cm,0) {\makebox{$\top$}};
    \node (topstatetransitions) at (3.5cm,0) {$n_1, \ldots, n_{j-1}$} ;
    \path[draw,>=stealth,->] ($(firststate.east)+(0,0.5em)$) [bend left] node [above, xshift=1.5em, yshift=0.5em] {$n_2$} to (dots.north west) ;
    \path[draw,>=stealth,->] (dots.north east) [bend left] node [above, xshift=1.5em, yshift=0.5em] {$n_{j-1}$} to ($(finalstate.west)+(0,0.5em)$) ;
    \path[draw,>=stealth,->] ($(finalstate.west)+(0,-0.5em)$) [bend left] node[below, xshift=-1.25cm, yshift=-0.25em] {$n_1$} to ($(firststate.east)+(0,-0.5em)$) ;
    \path[draw,>=stealth,->] (finalstate.east) [bend left] node [midway, above, yshift=0.5em] {$n_1$} to (topstate.west) ;
    \path[draw,>=stealth,->] (topstate.north east) arc[x radius=-1.5em, y radius=-1.7em, start angle=-45, end angle=-315] ;
  \end{tikzpicture}
\end{center}
That is, there are runs $(n_{j-1}, \traceval_{j-1}) \xrightarrow{w} (n_{j-1}, \traceval_{j-1})$, $(n_{j-1}, \traceval_{j-1}) \xrightarrow{w} \top$, and $\top \xrightarrow{w} \top$ with $w = n_1 \ldots n_{j-1}$. This results in the automaton being infinitely ambiguous \cite[{\textsection}3]{WeberSeidl91} and thus when the weight of the cycle is non-zero it is also infinite-valued.

Nonetheless, it is possible to define a sequence of finitely ambiguous weighted automata $\mathscr{A}(n)^{(\prf, \node)}_{(\traceval_1, \traceval_2)}$ (where $0 < n \in \Nat$) that constitute successively better approximations to $\mathscr{A}^{(\prf, \node)}_{(\traceval_1, \traceval_2)}$. For proofs $\prf$ satisfying certain restrictions, we can show that there exists $n$ such that we can use $\mathscr{A}(n)^{(\prf, \node)}_{(\traceval_1, \traceval_2)}$ to decide the trace value ordering relation $\sim^{\node}_{\prf}$.

\begin{definition}
\label{def:AutomataConstruction:Approximate}
  Let $\prf$ be a cyclic proof and take the alphabet $\Sigma_{\prf}$ and sets $T_{\node}(\traceval)$ as in \cref{def:AutomataConstruction:Full}. Then, for each node $\node_{\text{init}} \in \prf$, trace values $\traceval_1 \in \TraceValsOf_{\Antecedents}(\node_{\text{init}})$ and $\traceval_2 \in \TraceValsOf_{\Consequents}(\node_{\text{init}})$, and every $n > 0$, define the weighted automaton $\mathscr{A}(n)^{(\prf, \node_{\text{init}})}_{(\traceval_1, \traceval_2)}$ as follows:
  \begin{gather*}
    \mathscr{A}(n)^{(\prf, \node_{\text{init}})}_{(\traceval_1, \traceval_2)}
      = (Q, q_{\Antecedents}, F, \Delta_{\text{init}} \cup \Delta_{1} \cup \Delta_{2} \cup \Delta_{3} \cup \Delta_{4} \cup \Delta_{5}, \gamma)
  \end{gather*}
  where
  \vspace{-2em}
  \begin{align*}
    \phantom{\text{where }}
    Q &= 
      \begin{aligned}[t]
        (\NodesOf(\prf) \times \TraceValsOf_{\Antecedents}(\prf)) &\uplus \{ \bot \} \uplus \{ q_{\Antecedents} \} 
        \\ 
        &\uplus \{ \top_{\hspace{-0.25em}\node}^{i} \,\mid\, \node \in \NodesOf(\prf), 0 < i \leq n \} 
      \end{aligned}
    \\
    F &= Q \setminus \{ q_{\Antecedents} \} 
    \\
    \Delta_{\text{init}} &= \{ (q_{\Antecedents}, \node_{\text{init}}, (\node_{\text{init}}, \traceval_1)) \} 
    \\
    \Delta_{1} &= \{ ((\node, \traceval), \node', (\node', \traceval')) \mid (\node, \node') \in \prf \wedge (\traceval, \traceval') \in \dom(\TracePairsOf^{(\node, \node')}_{1}) \}
    \\
    \Delta_{2} &= \{ ((\node, \traceval), (T_{\node}(\traceval'), \traceval'), \bot) \mid \text{$\node$ axiomatic} \wedge \traceval' \in \TraceValsOf_{\Consequents}(\node) \wedge \traceval \in T_{\node}(\traceval') \}
    \\
    \Delta_{3} &= \{ ((\node, \traceval), \node', \top_{\hspace{-0.25em}\node'}^{1}) \mid (\node, \node') \in \prf \}
    \\
    \Delta_{4} &= \{ (\top_{\hspace{-0.25em}\node}^{i}, \node', \top_{\hspace{-0.25em}\node}^{i}) \,\mid\, 0 < i \leq n, \node' \in \NodesOf(\prf), \node' \neq \node \} 
    \\
    \Delta_{5} &= \{ (\top_{\hspace{-0.25em}\node}^{i}, \node, \top_{\hspace{-0.25em}\node}^{i+1}) \,\mid\, 0 < i < n \}
  \end{align*}
  \begin{gather*}
    \gamma(q, \sigma, q') = \left\{ \begin{aligned}
        & \TracePairsOf^{(\node, \node')}_{1}(\traceval, \traceval') && \text{if $q = (\node, \traceval)$ and $q = (\node', \traceval')$} \\
        & 0 && \text{otherwise}
      \end{aligned} \right.
  \end{gather*}
\end{definition}
Note that these `approximate' automata all share a common kernel with the `full' automaton: the only difference is in the number of `sink' states that they admit. The construction of \Cref{def:AutomataConstruction:Approximate} refines the sink state of the full construction into a collection of finite chains of sink states. The crucial difference is that each chain \emph{remembers} which letter of the alphabet (i.e.~node in the proof) was encountered on entry. The chain then serves to only allow words containing a finite number of successive occurrences of that node (i.e.~once a left-hand trace terminates, only paths which visit the next node of the proof up to $n$ times, and not more, are accepted). It is this which results in the approximate automata being finitely ambiguous. The full automaton, in contrast, permits an unbounded number of successive occurrences. Thus, in an abuse of notation, we will also sometimes find it convenient to write $\mathscr{A}(\omega)^{(\prf, \node)}_{(\traceval, \traceval')}$ for $\mathscr{A}^{(\prf, \node)}_{(\traceval, \traceval')}$.

\begin{lemma}
\label{lem:ApproximateAutomata:FiniteAmbiguity}
  When the trace pair function $\vec{\TracePairsOf}$ is trace injective, each approximate automaton $\mathscr{A}(n)^{(\prf, \node)}_{(\traceval_1, \traceval_2)}$ is finitely ambiguous.
\end{lemma}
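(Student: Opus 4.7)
The plan is to bound, by a constant depending only on $\prf$ and $n$, the number of accepting runs of $\mathscr{A}(n)^{(\prf, \node)}_{(\traceval_1, \traceval_2)}$ on any word $w \in \Sigma_{\prf}^{*}$. I will partition the accepting runs on $w$ according to the region in which they terminate: either (a) at some state $(\node', \traceval')$; (b) at $\bot$; or (c) at some sink state $\top_{\node'}^{j}$. Each such run must begin with the unique $\Delta_{\mathrm{init}}$ step from $q_{\Antecedents}$ into $(\node_{\mathrm{init}}, \traceval_{1})$ and then proceed through $(\node, \traceval)$ states, possibly exiting either to $\bot$ via a $\Delta_{2}$ step (on a trace-value-pair letter) or to $\top_{\node'}^{1}$ via a $\Delta_{3}$ step (on a node letter).

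For case (a), \cref{lem:Automaton:Faithful:Antecedent} shows that the run corresponds to a left-hand trace of length $|w|$ starting with $\traceval_{1}$ and following the path induced by reading $w$. Trace injectivity together with \cref{lem:TraceInjectivityImpliesBoundedTraces} implies that, for a fixed starting value and path, the trace is uniquely determined by its final value; since $\TraceValsOf_{\Antecedents}(\prf)$ is finite, setting $K \defeq |\TraceValsOf_{\Antecedents}(\prf)|$ gives at most $K$ such runs. For case (b), any such run consists of a case-(a) prefix of length $|w|-1$ followed by the $\Delta_{2}$ transition on the final (trace-value-pair) letter, so there are again at most $K$.

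The main obstacle is case (c). The crucial observation is that once a run enters $\top_{\node'}^{1}$, its continuation is fully deterministic: on any node $\neq \node'$ it self-loops via $\Delta_{4}$, and on $\node'$ it advances via $\Delta_{5}$ (dying if $\node'$ is read at $\top_{\node'}^{n}$). Consequently, if a run ends at $\top_{\node'}^{j}$, then its entry position is forced: it is the occurrence of $\node'$ in $w$ such that exactly $j-1$ further occurrences of $\node'$ appear in the remainder of $w$. Thus, for each $\node' \in \NodesOf(\prf)$, only the last $n$ occurrences of $\node'$ in $w$ can serve as admissible entry positions. For each such entry position $p$, the prefix of the run (a case-(a) run of length $p-1$) has at most $K$ possibilities by trace injectivity, and the $\Delta_{3}$ transition itself is then determined by $\sigma_{p} = \node'$ and the prefix's terminal node. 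Case (c) therefore contributes at most $n \cdot K \cdot |\NodesOf(\prf)|$ runs.

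Summing the three cases gives at most $2K + n \cdot K \cdot |\NodesOf(\prf)|$ accepting runs on any word $w$, a constant independent of $w$, and hence $\mathscr{A}(n)^{(\prf, \node)}_{(\traceval_1, \traceval_2)}$ is finitely ambiguous. The only non-routine point is the bookkeeping in case (c), where the finite chain length $n$ bounds the number of admissible entry positions per node---precisely the feature that the full construction lacks, as illustrated by the trace-cycle configuration discussed immediately before \cref{def:AutomataConstruction:Approximate}.
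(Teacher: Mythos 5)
Your proof is correct, and it follows the same overall decomposition as the paper's: partition the accepting runs of $\mathscr{A}(n)^{(\prf, \node)}_{(\traceval_1, \traceval_2)}$ by the kind of state in which they terminate, and use trace injectivity via \cref{lem:TraceInjectivityImpliesBoundedTraces} to conclude that any run remaining in the kernel of $(\node, \traceval)$ states is determined by its final trace value, which bounds the $(\node', \traceval')$- and $\bot$-terminated runs. Where you genuinely depart from the paper is in the sink-state case, which is the crux of the lemma: the paper disposes of it with a briefly sketched induction on the number of $\top^{i}_{\node}$ states occurring in the run, quoting a bound of $\inDegree(\prf) \times \TraceWidth(\prf)$, whereas you give a direct combinatorial argument that the terminal state $\top^{j}_{\node'}$ \emph{forces} the entry position into the sink chain to be the occurrence of $\node'$ in $w$ followed by exactly $j-1$ further occurrences, so that only the last $n$ occurrences of each node are admissible entry points, each contributing at most $|\TraceValsOf_{\Antecedents}(\prf)|$ prefix runs. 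This buys two things: your bound makes explicit how the parameter $n$ enters (the paper's stated constant for this case does not mention $n$, which is at best opaque and arguably too small), and it isolates precisely the feature that fails for $\mathscr{A}(\omega)^{(\prf, \node)}_{(\traceval_1, \traceval_2)}$, where the single $\top$ state admits unboundedly many entry positions. One cosmetic caveat: \cref{lem:Automaton:Faithful:Antecedent} is stated for the full automaton rather than the approximate one, so you should note that the kernel transitions $\Delta_{1}$ coincide in the two constructions before invoking it; this is immediate but worth a sentence.
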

\begin{proof}
  As for the automata $\mathscr{B}^{(\prf, \node)}_{(\traceval_1, \traceval_2)}$ this follows from \cref{lem:TraceInjectivityImpliesBoundedTraces}, which gives that there is a unique run ending in a state of the form $(\node, \traceval)$ for any given word.
  We use $\TraceWidth(\prf)$ to denote the maximum number of trace values occurring in the antecedent or consequent of any node in $\prf$, and refer to this as the \emph{trace width} of $\prf$. We also use $\inDegree(\prf)$ to denote the graph-theoretic notion of the \emph{in-degree} of $\prf$ (i.e.~the maximum number of predecessors for any given node in $\prf$).  With \cref{lem:TraceInjectivityImpliesBoundedTraces}, it is easy to see that the maximum number of runs of $\mathscr{A}(n)^{(\prf, \node)}_{(\traceval_1, \traceval_2)}$ that end in the state $\bot$ for any given word is $|\NodesOf(\prf)| \times \TraceWidth(\prf)$. Similarly making use of \cref{lem:TraceInjectivityImpliesBoundedTraces}, it can be shown by a straightforward induction on the number of states in the run of the form $\top^{n}_{\node}$ that maximum number of runs ending in such a state for any given word has a maximum bound of $\inDegree(\prf) \times \TraceWidth(\prf)$.
  (The bound is realised immediately in the base case; the inductive case is a trivial application of the inductive hypothesis since there can only be a single, unique transition to the final state for any given letter of the alphabet).
\end{proof}

The approximate automata are sound

\begin{lemma}
\label{lem:ApproximateAutomata:Soundness}
  $\rho : q_{\Antecedents} \xrightarrow{\vec{\sigma}_{1..m}} q_{m}$ is a run of $\mathscr{A}(n)^{(\prf, \node)}_{(\traceval, \traceval')}$ only if $\rho' : q_{\Antecedents} \xrightarrow{\vec{\sigma}_{1..m}} q'_{m}$ is a run of $\mathscr{A}(\omega)^{(\prf, \node)}_{(\traceval, \traceval')}$, with $q'_{i} = \top$ if $q_{i} \equiv \top^{j}_{\node}$ (for some $j$ and $\node$), and $q'_{i} = q_{i}$ otherwise (for all $0 < i \leq m$); moreover, $\mathsf{V}(\rho') = \mathsf{V}(\rho)$. 
\end{lemma}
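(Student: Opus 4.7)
The plan is to proceed by induction on $m$, the length of the word $\vec{\sigma}_{1..m}$. The base case $m = 0$ is immediate: the only run over the empty word in either automaton is the single state $q_{\Antecedents}$, which is non-accepting in both automata, so both runs have value $\bot$ and the collapsing rule is vacuous.

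For the inductive step, I would take a run $\rho$ of length $m+1$ in $\mathscr{A}(n)^{(\prf, \node)}_{(\traceval, \traceval')}$, apply the inductive hypothesis to its prefix of length $m$ to obtain a matching run in $\mathscr{A}(\omega)^{(\prf, \node)}_{(\traceval, \traceval')}$ ending at the state $q'_m$ prescribed by the collapsing rule, and then extend it by one step. To do so, I would case-analyse on which of $\Delta_{\text{init}}, \Delta_{1}, \ldots, \Delta_{5}$ the final transition $(q_m, \vec{\sigma}_{m+1}, q_{m+1})$ belongs to. For transitions in $\Delta_{\text{init}}$, $\Delta_{1}$, or $\Delta_{2}$ the corresponding sets $\Delta^{\Antecedents}_{\text{init}}$, $\Delta^{\Antecedents}_{1}$, $\Delta^{\Antecedents}_{2}$ of \cref{def:AutomataConstruction:Full} are literally the same, and $q'_m = q_m$, so the step lifts verbatim. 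A $\Delta_{3}$-transition $(q_m, \node'', \top^{1}_{\node''})$ has $q_m = (\node', \traceval'')$ with $(\node', \node'') \in \prf$, so by construction it lifts to $((\node', \traceval''), \node'', \top) \in \Delta^{\Antecedents}_{3}$, ending at $q'_{m+1} = \top$. Finally, any $\Delta_{4}$- or $\Delta_{5}$-transition leaves some $\top^{i}_{\node''}$ state which the collapsing rule maps to $\top$, so it lifts to the self-loop $(\top, \vec{\sigma}_{m+1}, \top) \in \Delta^{\Antecedents}_{4}$, which is present for every node-letter of $\prf$.

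The value equality will then follow directly once the transition correspondence is in place: both $\gamma$ and $\gamma_{\Antecedents}$ return $\TracePairsOf^{(\node', \node'')}_{1}(\traceval'', \traceval''')$ on transitions of the form $((\node', \traceval''), \node'', (\node'', \traceval'''))$ (which lie in the common $\Delta_{1} = \Delta^{\Antecedents}_{1}$) and $0$ on every other transition, so each lifted step has the same weight. Acceptance is preserved because $q_{\Antecedents}$ has no incoming transitions in either automaton, so for $m+1 > 0$ neither run can revisit it. I do not anticipate any genuine obstacle: the statement is essentially a bookkeeping result recording that the finite chain of sink states $\top^{1}_{\node''}, \ldots, \top^{n}_{\node''}$ is an unfolding of the single sink $\top$ of the full automaton. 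The only minor care point is verifying that entries into the sink region via $\Delta_{3}$ and continuations within it via $\Delta_{4}$ and $\Delta_{5}$ collapse uniformly under the mapping, which the above case split handles.
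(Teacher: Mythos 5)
Your proposal is correct and takes the same route as the paper, which proves this lemma by induction on the length $m$ of the run (the paper simply omits the details you supply). Your case analysis on the transition sets, the collapsing of $\Delta_{3}$, $\Delta_{4}$, $\Delta_{5}$ onto $\Delta^{\Antecedents}_{3}$ and $\Delta^{\Antecedents}_{4}$, and the observation that all non-zero weights arise only on the shared $\Delta_{1}$ transitions are exactly the bookkeeping the one-line proof leaves implicit.
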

\begin{proof}
  By induction on the length $m$ of the run.
\end{proof}

\begin{corollary}
\label{cor:ApproximateAutomata:Soundness}
  $L_{\mathscr{B}^{(\prf, \node)}_{(\traceval, \traceval')}} \sim L_{\mathscr{A}(n)^{(\prf, \node)}_{(\traceval, \traceval')}}$ implies that $L_{\mathscr{B}^{(\prf, \node)}_{(\traceval, \traceval')}} \sim L_{\mathscr{A}(\omega)^{(\prf, \node)}_{(\traceval, \traceval')}}$, for ${\sim} \in \{ {<}, {\leq} \}$.
\end{corollary}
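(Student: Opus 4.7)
The plan is to combine the preceding Approximate Automata Soundness lemma with a transitivity argument, so essentially no new ideas beyond run-matching are required. First, I would use \cref{lem:ApproximateAutomata:Soundness} to establish the intermediate inequality $L_{\mathscr{A}(n)^{(\prf, \node)}_{(\traceval, \traceval')}} \leq L_{\mathscr{A}(\omega)^{(\prf, \node)}_{(\traceval, \traceval')}}$ pointwise on words. Fix a word $w$: every accepting run $\rho$ of $\mathscr{A}(n)^{(\prf, \node)}_{(\traceval, \traceval')}$ over $w$ induces, via the lemma, an accepting run $\rho'$ of $\mathscr{A}(\omega)^{(\prf, \node)}_{(\traceval, \traceval')}$ over $w$ with $\mathsf{V}(\rho') = \mathsf{V}(\rho)$ (acceptingness is preserved because all non-initial states are final in both constructions, and the map sends $\top^{j}_{\node} \mapsto \top$, leaving other final states unchanged). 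Hence $R_{\mathscr{A}(n)^{(\prf, \node)}_{(\traceval, \traceval')}}(w) \subseteq R_{\mathscr{A}(\omega)^{(\prf, \node)}_{(\traceval, \traceval')}}(w)$ modulo this identification of runs, so the semiring sum (which is $\max$) can only grow, giving the required pointwise inequality; in the case $w \not\in \dom(\mathscr{A}(n)^{(\prf, \node)}_{(\traceval, \traceval')})$ the left-hand side is $\bot$, which is $\leq$ anything.

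Second, I would conclude the two cases by transitivity. For ${\sim} = {\leq}$: from the hypothesis $L_{\mathscr{B}^{(\prf, \node)}_{(\traceval, \traceval')}}(w) \leq L_{\mathscr{A}(n)^{(\prf, \node)}_{(\traceval, \traceval')}}(w)$ and the intermediate pointwise inequality, the result follows by transitivity of $\leq$ on $\Ordinals_{\bot}$. For ${\sim} = {<}$: one has to be slightly careful about the definition of $<$ on quantitative languages, which only quantifies over $w$ with $L_{\mathscr{B}^{(\prf, \node)}_{(\traceval, \traceval')}}(w) \neq \bot$. For any such $w$, the hypothesis gives $L_{\mathscr{B}^{(\prf, \node)}_{(\traceval, \traceval')}}(w) < L_{\mathscr{A}(n)^{(\prf, \node)}_{(\traceval, \traceval')}}(w)$, whence composing with $L_{\mathscr{A}(n)^{(\prf, \node)}_{(\traceval, \traceval')}}(w) \leq L_{\mathscr{A}(\omega)^{(\prf, \node)}_{(\traceval, \traceval')}}(w)$ yields strict inequality against the $\omega$-automaton at $w$.

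The main obstacle, such as it is, is purely bookkeeping: verifying that the collapsing map $\top^{j}_{\node} \mapsto \top$ supplied by \cref{lem:ApproximateAutomata:Soundness} genuinely preserves acceptance and weight, and that the pointwise passage from run-containment to language-containment survives the $\bot$ convention used in the max-plus semiring. Both are immediate from the definitions in \cref{def:AutomataConstruction:Full,def:AutomataConstruction:Approximate}, so the corollary is essentially a two-line consequence of the preceding lemma.
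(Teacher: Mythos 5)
Your proposal is correct and follows essentially the same route as the paper: the paper likewise lifts runs of $\mathscr{A}(n)^{(\prf, \node)}_{(\traceval, \traceval')}$ to value-preserving accepting runs of $\mathscr{A}(\omega)^{(\prf, \node)}_{(\traceval, \traceval')}$ via \cref{lem:ApproximateAutomata:Soundness} and concludes by chaining the inequalities (it just starts from a maximally valued run of $\mathscr{B}^{(\prf, \node)}_{(\traceval, \traceval')}$ rather than first stating the pointwise bound $L_{\mathscr{A}(n)} \leq L_{\mathscr{A}(\omega)}$ separately). Your extra care about acceptance preservation and the $\bot$ convention in the $<$ case is sound and, if anything, slightly more explicit than the paper's version.
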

\begin{proof}
  Suppose $\rho : q_0 \xrightarrow{\vec{\sigma}} q$ is a run of $\mathscr{B}^{(\prf, \node)}_{(\traceval, \traceval')}$ such that $\mathsf{V}(\rho)$ is a positive maximal trace. Since $L_{\mathscr{B}^{(\prf, \node)}_{(\traceval, \traceval')}} \sim L_{\mathscr{A}(n)^{(\prf, \node)}_{(\traceval, \traceval')}}$, there is a run $\rho' : q'_0 \xrightarrow{\vec{\sigma}} q'$ of $\mathscr{A}(n)^{(\prf, \node)}_{(\traceval, \traceval')}$ such that $\mathsf{V}(\rho) \sim \mathsf{V}(\rho')$. Then, by \cref{lem:ApproximateAutomata:Soundness}, there is a run $\rho'' : q''_0 \xrightarrow{\vec{\sigma}} q''$ of $\mathscr{A}(\omega)^{(\prf, \node)}_{(\traceval, \traceval')}$ such that $\mathsf{V}(\rho') = \mathsf{V}(\rho'')$, whence the result follows.
\end{proof}

%

We also prove a relative completeness lemma which will be useful later in showing full completeness for the restricted set of proofs that we will consider.

\begin{lemma}
\label{lem:ApproximateAutomata:RelativeCompleteness}
  Let $\rho : q_{\Antecedents} \xrightarrow{\vec{\sigma}_{1..m}} q_{m}$ be a run of $\mathscr{A}(\omega)^{(\prf, \node)}_{(\traceval, \traceval')}$ and k be the number of occurrences of $\vec{\sigma}_{i+1}$ in the sequence $\vec{\sigma}_{(i+1)..m}$, where $i$ is the least such that $q_{i+1} \equiv \top$ (we take $k = 0$ when there exists no such $i$); then $\rho' : q_{\Antecedents} \xrightarrow{\vec{\sigma}_{1..m}} q'_{m}$ is a run of $\mathscr{A}(n)^{(\prf, \node)}_{(\traceval, \traceval')}$ for each $n \geq k$, with $q'_{j} = q_{j}$ if $j \leq i$ (or $j \leq m$ if no such $i$ exists) and $q'_{j} = \top^{k'}_{\vec{\sigma}_{i+1}}$ otherwise, where $k'$ is the number of occurrences of $\vec{\sigma}_{i+1}$ in the sequence $\vec{\sigma}_{(i+1)..j}$, and moreover $\mathsf{V}(\rho) = \mathsf{V}(\rho')$.
\end{lemma}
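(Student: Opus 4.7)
The plan is to proceed by induction on the length $m$ of the run $\rho$, closely mirroring the argument used for \Cref{lem:ApproximateAutomata:Soundness}. The shape of $\rho'$ is already prescribed by the statement, so the task reduces to verifying three things at each step: that the required transition genuinely exists in $\mathscr{A}(n)^{(\prf, \node)}_{(\traceval, \traceval')}$, that the bound $n \geq k$ is never violated along the way, and that the weights agree.

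The base case ($m = 0$) is immediate, since both runs consist only of $q_{\Antecedents}$. For the inductive step, one applies the induction hypothesis to the prefix of length $m-1$, observing that the count associated with that prefix is at most $k$ (every occurrence of $\vec{\sigma}_{i+1}$ in $\vec{\sigma}_{(i+1)..(m-1)}$ is also one in $\vec{\sigma}_{(i+1)..m}$), so the hypothesis $n \geq k$ carries through to the prefix. One then analyses the final transition $q_{m-1} \xrightarrow{\vec{\sigma}_m} q_m$ of $\rho$ according to which of the transition sets of $\mathscr{A}(\omega)^{(\prf, \node)}_{(\traceval, \traceval')}$ it inhabits.

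If the final transition lies in $\Delta^{\Antecedents}_{\text{init}}$, $\Delta^{\Antecedents}_1$, or $\Delta^{\Antecedents}_2$, then its source and target both belong to the kernel common to $\mathscr{A}(\omega)$ and $\mathscr{A}(n)$; the identical transition is available in $\mathscr{A}(n)$ and carries the same weight by construction. If it lies in $\Delta^{\Antecedents}_3$, then $i = m-1$ in the statement, and $\mathscr{A}(n)$ offers the corresponding $\Delta_3$ transition into $\top^{1}_{\vec{\sigma}_m}$, which matches the prescribed $q'_m$ with $k' = 1$; both transitions carry weight $0$. Finally, if it lies in $\Delta^{\Antecedents}_4$, then $q'_{m-1}$ has the form $\top^{c}_{\vec{\sigma}_{i+1}}$ for some $c \geq 1$, and the approximate automaton provides either a $\Delta_4$ transition (when $\vec{\sigma}_m \neq \vec{\sigma}_{i+1}$, leaving the counter at $c$) or a $\Delta_5$ transition (when $\vec{\sigma}_m = \vec{\sigma}_{i+1}$, incrementing the counter to $c+1$), exactly matching the prescribed $q'_m$; again both carry weight $0$.

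I expect no real obstacle here: the argument is essentially a careful bookkeeping of the refined sink-state structure. The only point requiring genuine attention is verifying the availability of the $\Delta_5$ transition when it is needed — it exists only from $\top^{c}_{\vec{\sigma}_{i+1}}$ with $c < n$, and this is precisely where the hypothesis $n \geq k$ is used, since the counter $c$ tracked along $\rho'$ before any increment is always strictly less than $k$.
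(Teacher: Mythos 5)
Your proposal is correct and follows exactly the approach the paper takes (the paper's proof is just the one line ``By induction on the length $m$ of the run''); your case analysis on which transition set the final step of $\rho$ inhabits, and your observation that the $\Delta_5$ increments only ever fire from counters strictly below $k \leq n$, fill in precisely the bookkeeping the paper leaves implicit. No gaps.
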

\begin{proof}
  By induction on the length $m$ of the run.
\end{proof}

To define our restrictions for decidability we will make use of standard graph-theoretic notions of reachability and simple cycles, defined concretely for our formalisation of cyclic proofs as follows. If $\node$ and $\node'$ are nodes in a cyclic proof $\prf$, and $\traceval \in \TraceValsOf(\node)$ and $\traceval' \in \TraceValsOf(\node')$ are left-hand (resp.~right-hand) trace values, then we say that $(\node', \traceval')$ is \emph{reachable} from $(\node, \traceval)$ when there exists a path $\vec{\node}_{1..n} \in \prf$ with $\vec{\node}_{1} = \node$ and $\vec{\node}_{n} = \node'$ that is followed by some left-hand (resp.~right-hand) trace $\vec{\traceval}_{1..n}$ with $\vec{\traceval}_{1} = \traceval$ and $\vec{\traceval}_{n} = \traceval'$. We say that $\vec{\traceval}_{1..n}$ is a left (resp.~right) cycle along a path $\vec{\node}_{1..n}$ ($n > 1$) in $\prf$ to mean that $\vec{\traceval}$ is a left-hand (resp.~right-hand) trace following $\vec{\node}$ such that $\vec{\node}_{1} = \vec{\node}_{n}$ and $\vec{\traceval}_{1} = \vec{\traceval}_{n}$; we say that the cycle is rooted at $\vec{\node}_{1}$. Furthermore, we say that $\vec{\traceval}$ is a \emph{simple} cycle when, for all $1 < i < j \leq n$, it is not the case that $\vec{\node}_{i} = \vec{\node}_{j}$ and $\vec{\traceval}_{i} = \vec{\traceval}_{j}$. Notice that these definitions exactly correspond with reachability and (simple) cycles in the automata constructed from $\prf$. We extend slightly the notion of cycle and say that the pair $(\vec{\traceval}_{1..n}, \vec{\traceval'}_{1..n})$ is a \emph{binary} cycle along a path $\vec{\node}_{1..n}$ when both $\vec{\traceval}$ and $\vec{\traceval'}$ are cycles along $\vec{\node}$. Similarly, $(\vec{\traceval}, \vec{\traceval'})$ is a simple binary cycle when, for all $1 < i < j \leq n$, it is not the case that $\vec{\node}_{i} = \vec{\node}_{j}$, with $\vec{\traceval}_{i} = \vec{\traceval}_{j}$ and $\vec{\traceval'}_{i} = \vec{\traceval'}_{j}$. In this case notice that, individually, neither $\vec{\traceval}$ nor $\vec{\traceval'}$ need necessarily be simple cycles along $\vec{\node}$.

We now come to describe the restrictions that we place on proofs in order to obtain decidability of trace value ordering relations. 

\begin{definition}
\label{def:CompletenessRestrictions}
  Let $\prf$ be a cyclic proof; we say that, with respect to some given node $\node_{\text{init}}$ in $\prf$ and trace values $\traceval_{\Antecedents} \in \TraceValsOf_{\Antecedents}(\node_{\text{init}})$ and $\traceval_{\Consequents} \in \TraceValsOf_{\Consequents}(\node_{\text{init}})$, $\prf$ is:
  \begin{enumerate}[label={\roman*)},ref={\roman*}]
    \item 
    \label{def:CompletenessRestrictions:FinitelyProg}
    \emph{left finitely progressing} when for each edge $(\node, \node') \in \prf$ and pair of left-hand trace values $\traceval \in \TraceValsOf_{\Antecedents}(\node)$ and $\traceval' \in \TraceValsOf_{\Antecedents}(\node')$ such that $(\node, \traceval)$ is reachable from $(\node_{\text{init}}, \traceval_{\Antecedents})$ and $(\node, \node') \in \dom(\TracePairsOf^{(\node, \node')})$, it is the case that $\TracePairsOf^{(\node, \node')}(\traceval, \traceval')$ is a finite ordinal, i.e.~$\TracePairsOf^{(\node, \node')}(\traceval, \traceval') < \omega$. We define $\prf$ to be \emph{right finitely progressing} analogously, and say that $\prf$ is simply \emph{finitely progressing} when it is both left and right finitely progressing.
    \item 
    \label{def:CompletenessRestrictions:Dynamic}
    \emph{dynamic} when 
    it is the case that $\ProgPointsOf_{\vec{\node}}(\vec{\traceval}) > 0$ for every left and right simple cycle $\vec{\traceval}$ 
    along $\vec{\node}$ such that $(\vec{\node}_{1}, \vec{\traceval}_{1})$ is reachable from $(\node_{\text{init}}, \traceval_{\Antecedents})$ and $(\node_{\text{init}}, \traceval_{\Consequents})$ respectively.
    \item 
    \label{def:CompletenessRestrictions:Balanced}
    \emph{balanced} when $\ProgPointsOf_{\vec{\node}}(\vec{\traceval}) = \ProgPointsOf_{\vec{\node}}(\vec{\traceval'})$ for every simple left binary cycle $(\vec{\traceval}_{1..n}, \vec{\traceval'}_{1..n})$ along $\vec{\node}_{1..n}$ such that $(\vec{\node}_{1}, \vec{\traceval}_{1})$ and $(\vec{\node}_{1}, \vec{\traceval'}_{1})$ are both reachable from $(\node_{\text{init}}, \vec{\traceval}_{\Antecedents})$.
  \end{enumerate}
\end{definition}

The first condition ensures that the automata we construct from $\prf$ are \emph{sum} automata. The second and third conditions, along with the first, ensure completeness of the restricted automata for the trace value ordering relations. 
Note that all three conditions are decidable, since there are only finitely many edges and simple (binary) cycles in any given proof.

We will show that when a proof $\prf$ satisfies the above three conditions with respect to some node $\node$ and trace values $\traceval \in \TraceValsOf_{\Antecedents}(\node)$ and $\traceval' \in \TraceValsOf_{\Consequents}(\node)$, then there exists some $n$ such that $L_{\mathscr{B}^{(\prf, \node)}_{(\traceval, \traceval')}} \sim L_{\mathscr{A}(n)^{(\prf, \node)}_{(\traceval, \traceval')}}$ with $\mathscr{B}^{(\prf, \node)}_{(\traceval, \traceval')}$ grounded if and only if $\traceval' \sim^{\node}_{\prf} \traceval$ (for ${\sim} \in \{ {<}, {\leq} \}$). We will rely on two properties, which relate to the trace width $\TraceWidth(\prf)$ of $\prf$ (defined above in the proof of \cref{lem:ApproximateAutomata:FiniteAmbiguity}) and the (binary left-hand) cycle \emph{threshold} $\CycleThreshold(\prf)$ of $\prf$, which we define as the (necessarily finite) number of distinct triples $(\traceval, \traceval', \node)$ such that $\node \in \NodesOf(\prf)$, $\traceval \in \TraceValsOf_{\Antecedents}(\node)$ and $\traceval' \in \TraceValsOf_{\Antecedents}(\node)$. These properties are:
\begin{enumerate}[ref={(\arabic*)}]
  \item 
  \label{prop:SimpleCycles}
  Any (left- or right-hand) trace $\vec{\traceval}_{1..n}$ following a path $\vec{\node}_{1..n}$ containing strictly more than $\TraceWidth(\prf)$ occurrences of some node $\node$ \emph{must} contain a simple cycle.
  \item 
  \label{prop:SimpleBinaryCycles}
  Any pair $(\vec{\traceval}_{1..n}, \vec{\traceval'}_{1..n})$ of left-hand traces following a path $\vec{\node}_{1..n} \in \prf$ such that $n > \CycleThreshold(\prf)$ \emph{must} contain a simple binary cycle.
\end{enumerate}
 
In particular, under the conditions of \cref{def:CompletenessRestrictions}, the latter property entails that the difference between the size of two traces along some path is bounded. Let $\maxStep(\prf)$ denote the maximum value in the set $\{ \TracePairsOf^{(\node, \node')}_{1}(\traceval, \traceval') \,\mid\, (\node, \node') \in \prf \}$.

\begin{lemma}
\label{lem:LefthandTracePairs:SizeBounded}
  Let $\prf$ satisfy the conditions of \cref{def:CompletenessRestrictions} with respect to some node $\node$ and trace values $\traceval \in \TraceValsOf_{\Antecedents}(\node)$ and $\traceval' \in \TraceValsOf_{\Consequents}(\node)$, and let $\vec{\traceval}_{1..n}$ and $\vec{\traceval'}_{1..n}$ be left-hand traces rooted at $\node$ following some path $\vec{\node}_{1..n} \in \prf$; then 
  \begin{equation*}
    |\ProgPointsOf_{\vec{\node}}(\vec{\traceval}) - \ProgPointsOf_{\vec{\node}}(\vec{\traceval'})| \leq \CycleThreshold(\prf) \times \maxStep(\prf)
  \end{equation*}
\end{lemma}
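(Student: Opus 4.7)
The plan is to proceed by strong induction on the length $n$ of the path, using property~\ref{prop:SimpleBinaryCycles} to locate a simple binary cycle whenever $n$ is large, and the balanced condition (\cref{def:CompletenessRestrictions:Balanced}) to cancel such a cycle's contribution to the difference between the two progression sums.

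First I would observe that, by left finite progression (\cref{def:CompletenessRestrictions:FinitelyProg}), every individual trace-pair value appearing in either $\ProgPointsOf_{\vec{\node}}(\vec{\traceval})$ or $\ProgPointsOf_{\vec{\node}}(\vec{\traceval'})$ is a finite ordinal bounded above by $\maxStep(\prf)$; since $n$ is finite, both sums are themselves natural numbers, and so ordinary (commutative, associative) arithmetic applies. In particular $|\ProgPointsOf_{\vec{\node}}(\vec{\traceval}) - \ProgPointsOf_{\vec{\node}}(\vec{\traceval'})|$ is the usual absolute difference of natural numbers, and we need not worry about the fact that the definition of $\ProgPointsOf$ uses a reverse ordinal sum.

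For the base case, when $n \leq \CycleThreshold(\prf) + 1$, the path has at most $\CycleThreshold(\prf)$ trace-pair steps, each contributing at most $\maxStep(\prf)$ to each sum. Hence each of $\ProgPointsOf_{\vec{\node}}(\vec{\traceval})$ and $\ProgPointsOf_{\vec{\node}}(\vec{\traceval'})$ is itself bounded by $\CycleThreshold(\prf) \times \maxStep(\prf)$, and so, \emph{a fortiori}, is their absolute difference. For the inductive step, when $n > \CycleThreshold(\prf) + 1$, property~\ref{prop:SimpleBinaryCycles} provides positions $1 \leq i < j \leq n$ at which $(\vec{\traceval}_{i..j}, \vec{\traceval'}_{i..j})$ is a simple binary cycle along $\vec{\node}_{i..j}$. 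The prefixes $\vec{\traceval}_{1..i}$ and $\vec{\traceval'}_{1..i}$ following $\vec{\node}_{1..i}$ witness the reachability of $(\vec{\node}_i, \vec{\traceval}_i)$ and $(\vec{\node}_i, \vec{\traceval'}_i)$ from $(\node, \traceval)$, so the balanced condition applies and yields $\ProgPointsOf_{\vec{\node}_{i..j}}(\vec{\traceval}_{i..j}) = \ProgPointsOf_{\vec{\node}_{i..j}}(\vec{\traceval'}_{i..j})$. Excising the segment $\vec{\node}_{(i+1)..j}$, $\vec{\traceval}_{(i+1)..j}$, $\vec{\traceval'}_{(i+1)..j}$ then gives a strictly shorter path $\vec{\node}'$ with left-hand traces $\vec{\traceval''}$ and $\vec{\traceval'''}$ still rooted at $\node$ (the endpoints of the cycle agree in both nodes and trace values, so the cut-and-paste yields valid structures). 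By associativity of natural-number addition the cycle's contribution splits off additively from each sum, so the difference is preserved by excision, and the induction hypothesis delivers $|\ProgPointsOf_{\vec{\node}'}(\vec{\traceval''}) - \ProgPointsOf_{\vec{\node}'}(\vec{\traceval'''})| \leq \CycleThreshold(\prf) \times \maxStep(\prf)$, completing the step.

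The main obstacle I anticipate is bookkeeping around the excision step: verifying that cutting out the cycle segment does indeed yield a valid path and pair of left-hand traces still rooted at $\node$, that the reachability conditions required to invoke the balanced property are genuinely met, and that the reverse-sum definition of $\ProgPointsOf$ decomposes cleanly into (prefix)~+~(cycle)~+~(suffix). All of these follow from the coincidence of the cycle's endpoints and the finiteness of the progression values, but they need to be stated carefully to avoid sleight-of-hand with the noncommutative ordinal sum that appears in \cref{def:ProgressionSum}.
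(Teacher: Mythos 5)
Your proposal is correct and follows essentially the same route as the paper's own proof: well-founded induction on $n$, a base case bounding both sums directly by $\CycleThreshold(\prf) \times \maxStep(\prf)$, and an inductive step that locates a simple binary cycle via property~(2), invokes the balanced condition to equate the two cycles' contributions, and excises the cycle before applying the induction hypothesis. You also correctly identify the same caveat the paper makes explicit, namely that the decomposition of the reverse ordinal sum into prefix, cycle, and suffix is only licensed because finite progression forces all the summands to be natural numbers.
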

\begin{proof}
  By well-founded induction on the length $n$ of the traces/path. We first note that $\ProgPointsOf_{\vec{\node}}(\vec{\traceval}) - \ProgPointsOf_{\vec{\node}}(\vec{\traceval'})$ is well-defined (and an integer) since both $\ProgPointsOf_{\vec{\node}}(\vec{\traceval})$ and $\ProgPointsOf_{\vec{\node}}(\vec{\traceval})$ are finite ordinals, which is guaranteed since $\prf$ is finitely progressing (condition (\ref{def:CompletenessRestrictions:FinitelyProg}) of \cref{def:CompletenessRestrictions}).
  \begin{description}[listparindent={\parindent},parsep=0em]
    \item[($n \leq \CycleThreshold(\prf)$):]
    The result follows since in this case $\ProgPointsOf_{\vec{\node}}(\vec{\traceval})$ and $\ProgPointsOf_{\vec{\node}}(\vec{\traceval'})$ are both bounded by $\CycleThreshold(\prf) \times \maxStep(\prf)$.
    \item[($n > \CycleThreshold(\prf)$):] 
    In this case, by the property (\ref{prop:SimpleBinaryCycles}) above, $(\vec{\traceval}, \vec{\traceval'})$ must contain a simple binary cycle. Suppose {\WLOG}that this cycle is $(\vec{\traceval}_{i..j}, \vec{\traceval'}_{i..j})$. 
    Since $(\vec{\traceval}_{i..j}, \vec{\traceval'}_{i..j})$ is a cycle, by definition we have that $\vec{\node}_{i} = \vec{\node}_{j}$ with $\vec{\traceval}_{i} = \vec{\traceval}_{j}$ and $\vec{\traceval'}_{i} = \vec{\traceval'}_{j}$. Thus $\vec{\node}_{1..i} \cdot \vec{\node}_{(j+1)..n}$ is also a path in $\prf$ which, moreover, is followed by both $\vec{\traceval}_{1..i} \cdot \vec{\traceval}_{(j+1)..n}$ and $\vec{\traceval'}_{1..i} \cdot \vec{\traceval'}_{(j+1)..n}$. All that we have done here is remove the cycle from the original path. 
    Now, notice that the length of $\vec{\node}_{1..i} \cdot \vec{\node}_{(j+1)..n}$ is strictly less than $n$, thus by the inductive hypothesis we have that
    \begin{multline*}
      | \ProgPointsOf_{(\vec{\node}_{1..i} \cdot \vec{\node}_{(j+1)..n})}(\vec{\traceval}_{1..i} \cdot \vec{\traceval}_{(j+1)..n})
      \\
        {} - \ProgPointsOf_{(\vec{\node}_{1..i} \cdot \vec{\node}_{(j+1)..n})}(\vec{\traceval'}_{1..i} \cdot \vec{\traceval'}_{(j+1)..n}) |
      \\
        {} \leq (\CycleThreshold(\prf) \times \maxStep(\prf))
    \end{multline*}
    From \cref{def:ProgressionSum}, we have the following:
    \begin{equation*}
      \ProgPointsOf_{(\vec{\node}_{1..i} \cdot \vec{\node}_{(j+1)..n})}(\vec{\traceval}_{1..i} \cdot \vec{\traceval}_{(j+1)..n}) = \ProgPointsOf_{\vec{\node}_{j..n}}(\vec{\traceval}_{j..n}) + \ProgPointsOf_{\vec{\node}_{1..i}}(\vec{\traceval}_{1..i})
    \end{equation*}
    and
    \begin{align*}
      & \ProgPointsOf_{(\vec{\node}_{1..i} \cdot \vec{\node}_{(j+1)..n})}(\vec{\traceval'}_{1..i} \cdot \vec{\traceval'}_{(j+1)..n})
        \\ & \hspace{6em} {} = \ProgPointsOf_{\vec{\node}_{j..n}}(\vec{\traceval'}_{j..n}) + \ProgPointsOf_{\vec{\node}_{1..i}}(\vec{\traceval'}_{1..i})
    \end{align*}
    Moreover,
    \begin{align*}
      & \ProgPointsOf_{(\vec{\node}_{1..i} \cdot \vec{\node}_{(i+1)..j} \cdot \vec{\node}_{(j+1)..n})}(\vec{\traceval}_{1..i} \cdot \vec{\traceval}_{(i+1)..j} \cdot \vec{\traceval}_{(j+1)..n})
        \\ & \hspace{3em} {} = \ProgPointsOf_{\vec{\node}_{j..n}}(\vec{\traceval}_{j..n}) + \ProgPointsOf_{\vec{\node}_{i..j}}(\vec{\traceval}_{i..j}) + \ProgPointsOf_{\vec{\node}_{1..i}}(\vec{\traceval}_{1..i})
    \end{align*}
    and
    \begin{align*}
      & \ProgPointsOf_{(\vec{\node}_{1..i} \cdot \vec{\node}_{(i+1)..j} \cdot \vec{\node}_{(j+1)..n})}(\vec{\traceval'}_{1..i} \cdot \vec{\traceval'}_{(i+1)..j} \cdot \vec{\traceval'}_{(j+1)..n})
        \\ & \hspace{2em} {} = \ProgPointsOf_{\vec{\node}_{j..n}}(\vec{\traceval'}_{j..n}) + \ProgPointsOf_{\vec{\node}_{i..j}}(\vec{\traceval'}_{i..j}) + \ProgPointsOf_{\vec{\node}_{1..i}}(\vec{\traceval'}_{1..i})
    \end{align*}
    Finally, since $\prf$ is balanced (condition (\ref{def:CompletenessRestrictions:Balanced}) of \cref{def:CompletenessRestrictions}), we have that $\ProgPointsOf_{\vec{\node}_{i..j}}(\vec{\traceval}_{i..j}) = \ProgPointsOf_{\vec{\node}_{i..j}}(\vec{\traceval}_{i..j})$, and so therefore
    \begin{align*}
      & \mathrlap{\ProgPointsOf_{\vec{\node}_{1..n}}(\vec{\traceval}_{1..n}) - \ProgPointsOf_{\vec{\node}_{1..n}}(\vec{\traceval}_{1..n})} \\
      & = \ProgPointsOf_{\vec{\node}_{j..n}}(\vec{\traceval}_{j..n}) + \ProgPointsOf_{\vec{\node}_{i..j}}(\vec{\traceval}_{i..j}) + \ProgPointsOf_{\vec{\node}_{1..i}}(\vec{\traceval}_{1..i}) \\
      & \hspace{2em} {} - (\ProgPointsOf_{\vec{\node}_{j..n}}(\vec{\traceval'}_{j..n}) + \ProgPointsOf_{\vec{\node}_{i..j}}(\vec{\traceval'}_{i..j}) + \ProgPointsOf_{\vec{\node}_{1..i}}(\vec{\traceval'}_{1..i})) \\
      & = \ProgPointsOf_{\vec{\node}_{j..n}}(\vec{\traceval}_{j..n}) + \ProgPointsOf_{\vec{\node}_{1..i}}(\vec{\traceval}_{1..i}) \\
      & \hspace{2em} {} - (\ProgPointsOf_{\vec{\node}_{j..n}}(\vec{\traceval'}_{j..n}) + \ProgPointsOf_{\vec{\node}_{1..i}}(\vec{\traceval'}_{1..i})) \\
      & \hspace{4em} {} + \ProgPointsOf_{\vec{\node}_{i..j}}(\vec{\traceval}_{i..j}) - \ProgPointsOf_{\vec{\node}_{i..j}}(\vec{\traceval'}_{i..j}) \\
      & = \ProgPointsOf_{\vec{\node}_{j..n}}(\vec{\traceval}_{j..n}) + \ProgPointsOf_{\vec{\node}_{1..i}}(\vec{\traceval}_{1..i}) \\
      & \hspace{4em} {} - (\ProgPointsOf_{\vec{\node}_{j..n}}(\vec{\traceval'}_{j..n}) + \ProgPointsOf_{\vec{\node}_{1..i}}(\vec{\traceval'}_{1..i})) \\
      & = \ProgPointsOf_{(\vec{\node}_{1..i} \cdot \vec{\node}_{(j+1)..n})}(\vec{\traceval}_{1..i} \cdot \vec{\traceval}_{(j+1)..n}) \\
      & \hspace{4em} {} - \ProgPointsOf_{(\vec{\node}_{1..i} \cdot \vec{\node}_{(j+1)..n})}(\vec{\traceval'}_{1..i} \cdot \vec{\traceval'}_{(j+1)..n})
    \end{align*}
    whence the result then follows.
    
    Notice that this last piece of equational reasoning above does not hold in general when the size of the traces can be infinite ordinals; however it does hold for finite ordinals (i.e.~natural numbers). 
    \qedhere
  \end{description}
\end{proof}

We now prove that, under the conditions of \cref{def:CompletenessRestrictions}, the problem of deciding trace value ordering relations is equivalent to deciding containment between finitely ambiguous weighted languages.

\begin{theorem}
\label{thm:TraceValueOrdering-Automata-Equivalence:Restricted}
  If $\prf$ satisfies the three conditions of \cref{def:CompletenessRestrictions} with respect to some node $\node$ and trace values $\traceval \in \TraceValsOf_{\Antecedents}(\node)$ and $\traceval' \in \TraceValsOf_{\Consequents}(\node)$, then 
  \begin{equation*}
    L_{\mathscr{B}^{(\prf, \node)}_{(\traceval, \traceval')}} \sim L_{\mathscr{A}(N)^{(\prf, \node)}_{(\traceval, \traceval')}} \wedge \text{$\mathscr{B}^{(\prf, \node)}_{(\traceval, \traceval')}$ grounded} \Leftrightarrow \traceval' \sim^{\node}_{\prf} \traceval
  \end{equation*}
  where ${\sim} \in \{ {<}, {\leq} \}$ and $N = 2 + \CycleThreshold(\prf) \times \maxStep(\prf) \times \TraceWidth(\prf) + \TraceWidth(\prf)$
\end{theorem}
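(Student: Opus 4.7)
The plan is to prove the biconditional by leveraging \Cref{thm:TraceValueOrdering-Automata-Equivalence} to handle the $\omega$-level case and then bridging between $\mathscr{A}(\omega)$ and $\mathscr{A}(N)$ via \Cref{cor:ApproximateAutomata:Soundness} and \Cref{lem:ApproximateAutomata:RelativeCompleteness}.

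The easy direction, from $L_{\mathscr{B}^{(\prf, \node)}_{(\traceval, \traceval')}} \sim L_{\mathscr{A}(N)^{(\prf, \node)}_{(\traceval, \traceval')}}$ together with groundedness of $\mathscr{B}^{(\prf, \node)}_{(\traceval, \traceval')}$ to $\traceval' \sim^{\node}_{\prf} \traceval$, follows by first invoking \Cref{cor:ApproximateAutomata:Soundness} to lift the inequality to $L_{\mathscr{B}} \sim L_{\mathscr{A}(\omega)}$ and then applying \Cref{thm:TraceValueOrdering-Automata-Equivalence}.

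For the other direction, from $\traceval' \sim^{\node}_{\prf} \traceval$ I would first apply \Cref{thm:TraceValueOrdering-Automata-Equivalence} to obtain $L_{\mathscr{B}} \sim L_{\mathscr{A}(\omega)}$ and groundedness of $\mathscr{B}$. It then suffices to establish $L_{\mathscr{B}}(w) \sim L_{\mathscr{A}(N)}(w)$ for each $w \in \dom(\mathscr{B})$. Fix such a $w$ and take a maximally valued accepting run of $\mathscr{B}$ on $w$; by \Cref{cor:Automaton:Faithful:Consequent} it corresponds to a positive maximal right-hand trace $\vec{\traceval}_{1..n}$ along a path $\vec{\node}_{1..n}$ with $L_{\mathscr{B}}(w) = \ProgPointsOf_{\vec{\node}}(\vec{\traceval})$, which is a natural number because $\prf$ is right finitely progressing. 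Using $\traceval' \sim^{\node}_{\prf} \traceval$ I would extract a left-hand trace $\vec{\traceval'}_{\!\!1..k}$ ($k \leq n$) along $\vec{\node}$ with $\ProgPointsOf_{\vec{\node}}(\vec{\traceval}) \sim \ProgPointsOf_{\vec{\node}}(\vec{\traceval'})$; by \Cref{lem:Automaton:Faithful:Antecedent} this induces a run of $\mathscr{A}(\omega)$ on $w$ of value at least $\ProgPointsOf_{\vec{\node}}(\vec{\traceval'})$.

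The crux, and the main obstacle, is to convert this run of $\mathscr{A}(\omega)$ into a run of $\mathscr{A}(N)$ of the same value by way of \Cref{lem:ApproximateAutomata:RelativeCompleteness}, which requires the number of occurrences of $\vec{\node}_{k+1}$ in the tail $\vec{\node}_{(k+1)..n}$ to be at most $N$. If $k = n$ the tail is empty; in the partially maximal case the trailing pair-letter is not a node and does not contribute to this count. The genuine work is the case $k < n$, where the definition of $\sim^{\node}_{\prf}$ forces $\vec{\traceval}$ to be grounded. For this case I would argue by cycle excision: if the tail contained strictly more than $\TraceWidth(\prf)$ occurrences of $\vec{\node}_{k+1}$, then property~\ref{prop:SimpleCycles} would force $\vec{\traceval}$ restricted to the tail to contain a simple right cycle, which by the dynamic clause of \Cref{def:CompletenessRestrictions} is positively weighted. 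Iterated excision, combined with the balanced clause (to transport progression between paired left- and right-hand binary cycles) and \Cref{lem:LefthandTracePairs:SizeBounded} (which bounds the difference in size between any two left-hand traces along a common path by $\CycleThreshold(\prf) \times \maxStep(\prf)$), should force the tail count to be at most $N = 2 + \CycleThreshold(\prf) \times \maxStep(\prf) \times \TraceWidth(\prf) + \TraceWidth(\prf)$, thereby satisfying the hypothesis of \Cref{lem:ApproximateAutomata:RelativeCompleteness}. The hardest step will be orchestrating this excision so that the induced left-hand trace on the shortened path remains compatible with the original trace-pair relations and so that the binary-cycle bookkeeping yields exactly the stated constant.
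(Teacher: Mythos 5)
Your treatment of the forward direction and the overall reduction is the same as the paper's, and your identification of the crux --- satisfying the occurrence-count hypothesis of \Cref{lem:ApproximateAutomata:RelativeCompleteness} --- is exactly right. But the mechanism you propose for the hard case does not work. You aim to ``force the tail count to be at most $N$'' by iterated cycle excision. The tail count, however, is a property of the \emph{fixed} word $w = \vec{\node}_{1..n}$ together with the position $k$ at which your chosen left-hand trace terminates, and nothing prevents the path from looping more than $N$ times after that position: excising cycles produces a \emph{different} word $w'$, and establishing $L_{\mathscr{B}^{(\prf, \node)}_{(\traceval, \traceval')}}(w') \sim L_{\mathscr{A}(N)^{(\prf, \node)}_{(\traceval, \traceval')}}(w')$ says nothing about the original $w$. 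Nor can you derive a contradiction from a large tail count --- the situation is perfectly consistent; it only means the particular left-hand trace supplied by $\traceval' \sim^{\node}_{\prf} \traceval$ terminates too early to witness the required run of $\mathscr{A}(N)^{(\prf, \node)}_{(\traceval, \traceval')}$ on $w$.

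The missing idea is to go in the opposite direction: \emph{pump} rather than excise. When the tail contains more than $\TraceWidth(\prf)$ occurrences of the relevant node, property~\ref{prop:SimpleCycles} yields a simple right cycle in the right-hand trace which, by the dynamic and finitely-progressing conditions, has strictly positive finite weight. Repeating this cycle produces paths followed by positive maximal right-hand traces of arbitrarily large size; since $\traceval' \sim^{\node}_{\prf} \traceval$ and each left-hand step contributes at most $\maxStep(\prf)$, the dominating left-hand traces on these pumped paths must be arbitrarily long, and restricting one of them back to the original path yields a \emph{new} left-hand trace $\vec{\traceval'''}$ extending past position $m-1$, where $\vec{\node}_{m..n}$ is the shortest tail containing $\TraceWidth(\prf)+1$ occurrences of the node in question. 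The dynamic condition applied to the $\CycleThreshold(\prf) \times \maxStep(\prf)$ cycles that $\vec{\traceval'''}$ must traverse before position $m$, together with \Cref{lem:LefthandTracePairs:SizeBounded}, shows that $\vec{\traceval'''}$ truncated at $m-1$ has size at least that of the original left-hand trace, and its remaining tail now has fewer than $N$ occurrences, so \Cref{lem:ApproximateAutomata:RelativeCompleteness} applies. This pumping step is where the constant $N$ actually comes from; the excision-based bookkeeping you sketch would not produce it.
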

\begin{proof}
  \begin{description}
    \item[($\Rightarrow$):]
    Immediately from \cref{cor:ApproximateAutomata:Soundness} and \cref{thm:TraceValueOrdering-Automata-Equivalence}.
    \item[($\Leftarrow$):]
    Assume that $\traceval' \sim^{\node}_{\prf} \traceval$; it follows from \cref{thm:TraceValueOrdering-Automata-Equivalence} that $L_{\mathscr{B}^{(\prf, \node)}_{(\traceval, \traceval')}} \sim L_{\mathscr{A}^{(\prf, \node)}_{(\traceval, \traceval')}}$ and that $\mathscr{B}^{(\prf, \node)}_{(\traceval, \traceval')}$ is grounded. To show that $L_{\mathscr{B}^{(\prf, \node)}_{(\traceval, \traceval')}} \sim L_{\mathscr{A}(N)^{(\prf, \node)}_{(\traceval, \traceval')}}$, take an arbitrary accepting run $\rho : q \xrightarrow{\vec{\sigma}_{1..n}} q_{n}$ of $\mathscr{B}^{(\prf, \node)}_{(\traceval, \traceval')}$. Since $L_{\mathscr{B}^{(\prf, \node)}_{(\traceval, \traceval')}} \sim L_{\mathscr{A}^{(\prf, \node)}_{(\traceval, \traceval')}}$, it follows that there is also an accepting run $\rho' : q \xrightarrow{\vec{\sigma}_{1..n}} q'_{n}$ of $\mathscr{A}^{(\prf, \node)}_{(\traceval, \traceval')}$ such that $\mathsf{V}(\rho) \sim \mathsf{V}(\rho')$. We now consider the following two exhaustive cases:
    \begin{itemize}[label={-},wide,parsep=0em,itemsep=0.5em]
      \item 
      If there is no $q'_{i+1} = \top$ ($0 < i < n$), or the number $k$ of occurrences of $\vec{\sigma}_{i+1}$ in the sequence $\vec{\sigma}_{(i+1)..n}$, where $i$ is the least such that $q'_{i+1} = \top$, is strictly less than $N$, then we have immediately by \cref{lem:ApproximateAutomata:RelativeCompleteness} that there is a run $q'' : q \xrightarrow{\vec{\sigma}_{1..n}} q''_{n}$ of $\mathscr{A}(m)^{(\prf, \node)}_{(\traceval, \traceval')}$ for all $m > 0$ or $m > k$ respectively, such that $\mathsf{V}(\rho') = \mathsf{V}(\rho'')$. Either way, this holds for for $m = N > k \geq 0$ in particular.
      \item 
      We have $k \geq N$, with $k$ the number of occurrences of $\vec{\sigma}_{i+1}$ in the sequence $\vec{\sigma}_{(i+1)..n}$ where $i$ is the least such that $q'_{i+1} = \top$. In this case, we show that there also exists a run $\rho'' : q \xrightarrow{\vec{\sigma}_{1..n}} q''_n$ of $\mathscr{A}^{(\prf, \node)}_{(\traceval, \traceval')}$ such that $\mathsf{V}(\rho') \leq \mathsf{V}(\rho'')$ and the number $k'$ of occurrences of $\vec{\sigma}_{j+1}$ in the sequence $\vec{\sigma}_{(j+1)..n}$, where $j$ is the least such that $q''_{j} = \top$, satisfies $k' < N$.
      \par
      Firstly, notice that each element of $\vec{\sigma}$ is a node in $\prf$, thus in the remainder of this case we shall write $\vec{\sigma}_{1..n}$ as $\vec{\node}_{1..n}$. Notice also that we can construct a left-hand trace $\vec{\traceval}_{1..i}$ (beginning with $\traceval$) from the states $q'_{1}, \ldots, q'_{i}$, since $q'_{m} = (\vec{\node}_{m}, \vec{\traceval}_{m})$ for each $m \leq i$. Then we have by \cref{lem:Automaton:Faithful:Antecedent} that $\vec{\traceval}_{1..i}$ follows $\vec{\node}_{1..n}$ and $\mathsf{V}(\rho') = \ProgPointsOf_{\vec{\node}}(\vec{\traceval})$. Similarly, we can construct a right-hand trace $\vec{\traceval'}_{1..n}$ (beginning with at $\traceval'$) from the states of $\rho$ which, by \cref{lem:Automaton:Faithful:Consequent}, also follows $\vec{\node}_{1..n}$ and for which $\mathsf{V}(\rho) = \ProgPointsOf_{\vec{\node}}(\vec{\traceval'})$. Notice that since $\rho$ is an accepting run, by \cref{cor:Automaton:Faithful:Consequent}, $\vec{\traceval'}$ is also a \emph{positive maximal} right-hand trace.
      \par
      Now, there are $k \geq N > \TraceWidth(\prf)$ occurrences of $\vec{\node}_{i+1}$ in the sequence $\vec{\node}_{(i+1)..n}$. Let $\vec{\node}_{m..n}$ be the (smallest) path which is a tail of $\vec{\node}_{1..n}$ containing $\TraceWidth(\prf) + 1$ occurrences of $\vec{\node}_{i+1}$. By property \ref{prop:SimpleCycles} above, $\vec{\traceval'}_{m..n}$ must contain a simple cycle, say $\vec{\traceval'}_{r..s}$ (with $m \leq r < s \leq n$). Since $\prf$ is both \emph{dynamic} and \emph{finitely progressing} we have that $0 < \ProgPointsOf_{\vec{\node}_{r..s}}(\vec{\traceval'}_{r..s}) < \omega$. This means that, via a `pumping' construction, we can build a path in $\prf$ followed by a \emph{positive maximal} right-hand trace whose size is \emph{arbitrarily} (finitely) large, as follows:
      \begin{align*}
        \text{path:} \hspace{2em} & \vec{\node}_{1..r} \cdot {\overbrace{\vphantom{I}\vec{\node}_{(r+1)..s} \cdot \ldots}^{\text{\clap{arbitrarily many occurrences of $\vec{\node}_{(r+1)..s}$}}}} \cdot \vec{\node}_{(s+1)..n}
          \\
        \text{trace:} \hspace{2em} & \vec{\traceval\mathrlap{'}}_{1..r} \cdot {\underbrace{\vec{\traceval\mathrlap{'}}_{(r+1)..s} \cdot \ldots}_{\text{\clap{matching number of occurrences of $\vec{\traceval'}_{(r+1)..s}$}}}} \cdot \vec{\traceval\mathrlap{'}}_{(s+1)..n}
      \end{align*}
      In particular, there exists such a path $\vec{\node'}$ followed by such a trace $\vec{\traceval''}$ with $\ProgPointsOf_{\vec{\node'}}(\vec{\traceval''}) \geq \maxStep(\prf) \times s$. Since $\traceval' \sim^{\node}_{\prf} \traceval$, there must exist a left-hand trace $\vec{\traceval'''}_{1..t}$ following $\vec{\node'}$ such that $\ProgPointsOf_{\vec{\node'}}(\vec{\traceval'''}) \geq \ProgPointsOf_{\vec{\node'}}(\vec{\traceval''})$. Crucially, since $\ProgPointsOf_{\vec{\node'}}(\vec{\traceval'''}) \geq \maxStep(\prf) \times s$, it must be that $t \geq s$. Consequently, $\vec{\traceval'''}_{1..s}$ follows $\vec{\node}_{1..n}$.
      \par
      To complete the case, we note the following. First, by \cref{lem:LefthandTracePairs:SizeBounded}, we have that $| \ProgPointsOf_{\vec{\node}}(\vec{\traceval}_{1..i}) - \ProgPointsOf_{\vec{\node}}(\vec{\traceval'''}_{1..i}) | \leq \CycleThreshold(\prf) \times \maxStep(\prf)$. Second, there are at least $1 + \CycleThreshold(\prf) \times \maxStep(\prf) \times \TraceWidth(\prf)$ occurrences of $\vec{\node}_{i+1}$ in the path $\vec{\node}_{(i+1)..(m-1)}$. This means, again by property \ref{prop:SimpleCycles} above, that there are (at least) $\CycleThreshold(\prf) \times \maxStep(\prf)$ basic cycles in the left-hand trace $\vec{\traceval'''}_{(i+1)..(m-1)}$. Since $\prf$ is dynamic it then follows that $\ProgPointsOf_{\vec{\node}_{(i+1)..(m-1)}}(\vec{\traceval'''}_{(i+1)..(m-1)}) \geq \CycleThreshold(\prf) \times \maxStep(\prf)$, and therefore it must be that $\ProgPointsOf_{\vec{\node}}(\vec{\traceval'''}_{1..(m-1)}) \geq \ProgPointsOf_{\vec{\node}}(\vec{\traceval}_{1..i})$. Finally, by \cref{lem:Automaton:Faithful:Antecedent}, there exists a run $\rho''' : q \xrightarrow{\vec{\node}_{1..n}} q'''_{n}$ of $\mathscr{A}^{(\prf, \node)}_{(\traceval, \traceval')}$ with $q'''_{u} = (\vec{\node}_{u}, \vec{\traceval'''}_{u})$ for each $u < m$; moreover we may assume without loss of generality that $q'''_{u} = \top$ for $m \leq u \leq n$, since the automaton may transition to the $\top$ state at any point. Hence, $\mathsf{V}(\rho''') = \ProgPointsOf_{\vec{\node}}(\vec{\traceval'''}_{1..(m-1)}) \geq \ProgPointsOf_{\vec{\node}}(\vec{\traceval}_{1..i}) = \mathsf{V}(\rho')$. Notice that $\vec{\node}_{m}$ is an occurrence of $\vec{\node}_{i+1}$, and that there must necessarily be fewer than $N$ occurrences of $\vec{\node}_{i+1}$ in the path $\vec{\node}_{m..n}$. Thus, by \cref{lem:ApproximateAutomata:RelativeCompleteness}, it follows that there is a run $\rho'' : q \xrightarrow{\vec{\node}_{1..n}} q''_{n}$ of $\mathscr{A}(N)^{(\prf, \node)}_{(\traceval, \traceval')}$ with $\mathsf{V}(\rho'') = \mathsf{V}(\rho''') \geq \mathsf{V}(\rho')$.
    \end{itemize}
    In both cases, we have a run $q'' : q \xrightarrow{\vec{\sigma}_{1..n}} q''_{n}$ of $\mathscr{A}(N)^{(\prf, \node)}_{(\traceval, \traceval')}$ such that $\mathsf{V}(\rho) \sim \mathsf{V}(\rho') \leq \mathsf{V}(\rho'')$. Notice that, by construction, this must be an accepting run, and so $\mathsf{V}(\rho'') \leq L_{\mathscr{A}(N)^{(\prf, \node)}_{(\traceval, \traceval')}}(\vec{\sigma}_{1..n})$, whence the result follows.
    \qedhere
  \end{description}
\end{proof}

Finally therefore, as a corollary, we may decide trace value relations.

\begin{corollary}
  If $\prf$ satisfies the three conditions of \cref{def:CompletenessRestrictions} with respect to some node $\node$ and trace values $\traceval \in \TraceValsOf_{\Antecedents}(\node)$ and $\traceval' \in \TraceValsOf_{\Consequents}(\node)$, then it is decidable whether or not $\traceval' \sim^{\node}_{\prf} \traceval$ holds.
\end{corollary}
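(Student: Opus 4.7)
The plan is to reduce the question to a weighted language containment problem via \Cref{thm:TraceValueOrdering-Automata-Equivalence:Restricted} and then appeal to the decidability result of \cite{FiliotGR14} for finitely ambiguous weighted sum automata. Concretely, \Cref{thm:TraceValueOrdering-Automata-Equivalence:Restricted} tells us that, under the hypotheses, $\traceval' \sim^{\node}_{\prf} \traceval$ holds if and only if both $\mathscr{B}^{(\prf, \node)}_{(\traceval, \traceval')}$ is grounded and $L_{\mathscr{B}^{(\prf, \node)}_{(\traceval, \traceval')}} \sim L_{\mathscr{A}(N)^{(\prf, \node)}_{(\traceval, \traceval')}}$, where $N = 2 + \CycleThreshold(\prf) \times \maxStep(\prf) \times \TraceWidth(\prf) + \TraceWidth(\prf)$. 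So it suffices to argue that (a) the constant $N$ is effectively computable from $\prf$, (b) groundedness of $\mathscr{B}^{(\prf, \node)}_{(\traceval, \traceval')}$ is decidable, and (c) the weighted language containment relation is decidable.

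For (a), note that $\TraceWidth(\prf)$ and $\CycleThreshold(\prf)$ are finite and computable from $\prf$ by direct inspection, while $\maxStep(\prf)$ is the maximum of the finitely many values $\TracePairsOf^{(\node, \node')}_{1}(\traceval, \traceval')$ on edges of $\prf$; by the finitely progressing condition (condition (\ref{def:CompletenessRestrictions:FinitelyProg}) of \Cref{def:CompletenessRestrictions}), each such value is a finite ordinal, so $\maxStep(\prf) < \omega$ and hence $N \in \Nat$. This same observation shows that both $\mathscr{B}^{(\prf, \node)}_{(\traceval, \traceval')}$ and $\mathscr{A}(N)^{(\prf, \node)}_{(\traceval, \traceval')}$ are \emph{sum automata} over the natural-number tropical semiring.

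For (b), groundedness requires that every \emph{reachable} final state $(\node', \traceval')$ of $\mathscr{B}^{(\prf, \node)}_{(\traceval, \traceval')}$ satisfies the groundedness predicate. Reachability is a finite graph-theoretic question on the (finitely many) states of the automaton, and the groundedness predicate has been stipulated to be decidable on the relevant class of rule instances (axiomatic, in the standing assumption following the definition of ground traces); there are only finitely many reachable candidate pairs to check.

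For (c), we combine finite ambiguity with the sum-automaton property: by \Cref{lem:ApproximateAutomata:FiniteAmbiguity} (and the analogous observation already recorded there for $\mathscr{B}$), both automata are finitely ambiguous, and by the result of \cite{FiliotGR14} language containment $L_{\mathscr{B}} \sim L_{\mathscr{A}(N)}$ is decidable for ${\sim} \in \{ {<}, {\leq} \}$ in this class. The main (only) technical obstacle is conceptual rather than computational: recognizing that all three conditions of \Cref{def:CompletenessRestrictions} play a distinct role — condition (\ref{def:CompletenessRestrictions:FinitelyProg}) makes $N$ finite and delivers sum automata over $\Nat$, while conditions (\ref{def:CompletenessRestrictions:Dynamic}) and (\ref{def:CompletenessRestrictions:Balanced}) were already consumed by \Cref{thm:TraceValueOrdering-Automata-Equivalence:Restricted} to guarantee that the \emph{finite} approximation $\mathscr{A}(N)$ suffices in place of the (generally infinitely ambiguous, and thus not directly usable) full automaton $\mathscr{A}^{(\prf, \node)}_{(\traceval, \traceval')}$. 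Assembling (a)--(c) yields the decidability of $\traceval' \sim^{\node}_{\prf} \traceval$.
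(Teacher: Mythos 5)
Your proposal is correct and follows essentially the same route as the paper's own proof: reduce via \cref{thm:TraceValueOrdering-Automata-Equivalence:Restricted} to groundedness plus containment between $\mathscr{B}^{(\prf, \node)}_{(\traceval, \traceval')}$ and $\mathscr{A}(N)^{(\prf, \node)}_{(\traceval, \traceval')}$, observe that these are finitely ambiguous sum automata under the conditions of \cref{def:CompletenessRestrictions}, and invoke the decidability of containment from \cite{FiliotGR14}. Your additional remarks on the computability of $N$ and the decidability of groundedness simply make explicit what the paper leaves implicit.
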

\begin{proof}
  As stated previously, we may decide whether an automaton $\mathscr{B}^{(\prf, \node)}_{(\traceval, \traceval')}$ is grounded or not, and whether the conditions of \cref{def:CompletenessRestrictions} hold with respect to $\node$, $\traceval$ and $\traceval'$. Under the conditions of \cref{def:CompletenessRestrictions}, the weighted automata $\mathscr{A}(n)^{(\prf, \node)}_{(\traceval, \traceval')}$ and $\mathscr{B}^{(\prf, \node)}_{(\traceval, \traceval')}$ are finitely ambiguous sum automata, for which containment is decidable \cite{FiliotGR14}. Decidability of $\traceval' \sim^{\node}_{\prf} \traceval$ then follows from \cref{thm:TraceValueOrdering-Automata-Equivalence:Restricted}.
\end{proof}

\bibliographystyle{plain}
\bibliography{ordinal_inference}

\end{document}